\newcommand\figref[1]{Fig.~\ref{#1}}              
\definecolor{darkgreen}{RGB}{0,100,0}    
\definecolor{darkblue}{RGB}{0,0,139}    
\definecolor{darkred}{rgb}{0.55, 0.0, 0.0}
\newcommand{\q}{\ensuremath{\boldsymbol q}}
\renewcommand{\v}{\ensuremath{\boldsymbol v}}
\renewcommand{\u}{\ensuremath{\boldsymbol u}}
\newcommand{\W}{\ensuremath{\boldsymbol W}}
\newcommand{\h}{\ensuremath{\boldsymbol h}}
\renewcommand{\d}{\ensuremath{\boldsymbol d}}
\newcommand{\V}{\ensuremath{\boldsymbol V}}
\newcommand{\I}{\ensuremath{\boldsymbol I}}
\renewcommand{\a}{\ensuremath{\boldsymbol a}}
\newcommand{\x}{\ensuremath{\boldsymbol x}}
\newcommand{\p}{\ensuremath{\boldsymbol p}}
\newcommand{\A}{\ensuremath{\boldsymbol A}}
\renewcommand{\S}{\ensuremath{\boldsymbol S}}
\newcommand{\B}{\ensuremath{\boldsymbol B}}
\newcommand{\D}{\ensuremath{\boldsymbol D}}
\newcommand{\X}{\ensuremath{\boldsymbol X}}
\newcommand{\Z}{\ensuremath{\boldsymbol Z}}
\newcommand{\Xs}{\ensuremath{\mathcal X}}
\newcommand{\n}{\ensuremath{\boldsymbol n}}
\newcommand{\y}{\ensuremath{\boldsymbol y}}
\renewcommand{\H}{\ensuremath{\boldsymbol H}}
\newcommand{\s}{\ensuremath{\boldsymbol{s}}}
\newcommand{\E}{\ensuremath{\mathbb E}}
\newcommand{\C}{\ensuremath{\mathbb C}}
\newcommand{\M}{\ensuremath{\mathbf M}}
\newcommand{\R}{\ensuremath{\mathcal R}}
\newcommand{\0}{\ensuremath{\mathbf 0}}
\newcommand{\K}{\ensuremath{\mathcal K}}
\renewcommand{\L}{\ensuremath{\mathcal L}}
\newcommand{\N}{\ensuremath{\mathcal N}}
\newcommand{\CN}{\ensuremath{\mathcal {CN}}}
\newcommand{\dcsr}{\ensuremath{\mathrm {\Gamma^{\dl}}}}
\newcommand{\ucsr}{\ensuremath{\mathrm {\Gamma^{\ul}}}}
\newcommand{\Prob}[1]{\mathcal{P}_{#1}}
\newcommand{\U}{\ensuremath{\mathbf {U}}}
\newcommand{\crlb}{\ensuremath{\mathrm {\Gamma^{\est}}}}
\newcommand{\Var}{\ensuremath{\mathrm {Var}}}
\newcommand{\dl}{\ensuremath{\mathtt{DL}}}
\newcommand{\ul}{\ensuremath{\mathtt{UL}}}
\newcommand{\est}{\ensuremath{\mathtt{est}}}
\renewcommand{\n}{\ensuremath{\mathbf {n}}}
\newcommand{\Rq}{\ensuremath{\mathrm {Rq}}}
\newcommand{\order}[1]{\ensuremath{{#1}\textsuperscript{th}}}
\newcommand{\e}{\mathrm{e}}
\newtheorem{theorem}{Theorem}
\newtheorem{remark}{Remark}
\newcommand{\secref}[1]{Section \ref{#1}}
\newcommand{\Ha}[1]{\mathbf{H}_{a,#1}}
\DeclareMathOperator{\tr}{tr}                        
\DeclareMathOperator{\rank}{rank}                    
\definecolor{orcidlogocol}{HTML}{A6CE39}
\tikzset{
  orcidlogo/.pic={
    \fill[orcidlogocol] svg{M256,128c0,70.7-57.3,128-128,128C57.3,256,0,198.7,0,128C0,57.3,57.3,0,128,0C198.7,0,256,57.3,256,128z};
    \fill[white] svg{M86.3,186.2H70.9V79.1h15.4v48.4V186.2z}
                 svg{M108.9,79.1h41.6c39.6,0,57,28.3,57,53.6c0,27.5-21.5,53.6-56.8,53.6h-41.8V79.1z M124.3,172.4h24.5c34.9,0,42.9-26.5,42.9-39.7c0-21.5-13.7-39.7-43.7-39.7h-23.7V172.4z}
                 svg{M88.7,56.8c0,5.5-4.5,10.1-10.1,10.1c-5.6,0-10.1-4.6-10.1-10.1c0-5.6,4.5-10.1,10.1-10.1C84.2,46.7,88.7,51.3,88.7,56.8z};
  }
}
\newcommand\orcidicon[1]{\href{https://orcid.org/#1}{\mbox{\scalerel*{
\begin{tikzpicture}[yscale=-1,transform shape]
\pic{orcidlogo};
\end{tikzpicture}
}{|}}}}
\def\ps@IEEEtitlepagestyle{%
  \def\@oddfoot{\mycopyrightnotice}%
  \def\@oddhead{\hbox{}\@IEEEheaderstyle\leftmark\hfil\thepage}\relax
  \def\@evenhead{\@IEEEheaderstyle\thepage\hfil\leftmark\hbox{}}\relax
  \def\@evenfoot{}%
}
\def\mycopyrightnotice{%
  \begin{minipage}{\textwidth}
  \centering \scriptsize
  \copyright~2025 IEEE. Personal use of this material is permitted.  Permission from IEEE must be obtained for all other uses, in any current or future media, including reprinting/republishing this material for advertising or promotional purposes, creating new collective works, for resale or redistribution to servers or lists, or reuse of any copyrighted component of this work in other works.
  \end{minipage}
}
\begin{document}

\bstctlcite{IEEEexample:BSTcontrol}

\title{Securing Integrated Sensing and Communication Against a Mobile Adversary: A Stackelberg Game with Deep Reinforcement Learning}

\author{Milad Tatar Mamaghani, Xiangyun Zhou,~\emph{Fellow,~IEEE}, Nan Yang,~\emph{Senior Member,~IEEE}, and \\A. Lee Swindlehurst,~\emph{Fellow,~IEEE}
\thanks{This work was supported by the Australian Research Council’s Discovery Projects funding scheme (Grant ID: DP220101318).}
\thanks{M. Tatar Mamaghani was with the Australian National University, Canberra, ACT 2601, Australia (\href{mailto:milad.tatarmamaghani@gmail.com}{\textcolor{black}{milad.tatarmamaghani@gmail.com}})}
\thanks{X. Zhou and N. Yang are with the School of Engineering, Australian National University, Canberra, ACT 2601, Australia (email: \href{mailto:xiangyun.zhou@anu.edu.au}{\textcolor{black}{xiangyun.zhou@anu.edu.au}}; \href{mailto:nan.yang@anu.edu.au}{\textcolor{black}{nan.yang@anu.edu.au}}).}
\thanks{A. L. Swindlehurst is with the Center for Pervasive Communications and
Computing, University of California, Irvine, CA 92697, USA (email: \href{mailto:swindle@uci.edu}{\textcolor{black}{swindle@uci.edu}}).}}

\maketitle

\begin{abstract}
In this paper, we study a secure integrated sensing and communication (ISAC) system employing a full-duplex base station with sensing capabilities against a mobile proactive adversarial target---a malicious unmanned aerial vehicle (M-UAV). We develop a game-theoretic model to enhance communication security, radar sensing accuracy, and power efficiency. The interaction between the legitimate network and the mobile adversary is formulated as a non-cooperative Stackelberg game (NSG), where the M-UAV acts as the leader and strategically adjusts its trajectory to improve its eavesdropping ability while conserving power and avoiding obstacles. In response, the legitimate network, acting as the follower, dynamically allocates resources to minimize network power usage while ensuring required secrecy rates and sensing performance. To address this challenging problem, we propose a low-complexity successive convex approximation (SCA) method for network resource optimization combined with a deep reinforcement learning (DRL) algorithm for adaptive M-UAV trajectory planning through sequential interactions and learning. Simulation results demonstrate the efficacy of the proposed method in addressing security challenges of dynamic ISAC systems in 6G, i.e., achieving a Stackelberg equilibrium with robust performance while mitigating the adversary’s ability to intercept network signals. 

\end{abstract}

\begin{IEEEkeywords}
Integrated sensing and communication, physical-layer security,  mobile eavesdropping, game theory, convex optimization, deep reinforcement learning.
\end{IEEEkeywords}

\IEEEpeerreviewmaketitle

\section{Introduction}

\IEEEPARstart{I}{ntegrated} sensing and communication (ISAC) is an emerging technology poised to play a critical role in the success of sixth-generation (6G) wireless networks. ISAC has recently garnered significant attention from both the academic community \cite{liu2022integrated} and industry \cite{bayesteh2022integrated}. This paradigm shift moves beyond traditional separate communication and sensing systems, leveraging the shared wireless medium to perform both tasks simultaneously. ISAC not only boosts network efficiency but also unlocks new avenues for advanced applications, allowing wireless networks to function as sensors for services such as localization, gesture recognition, and object tracking, among others \cite{Wei2022}.
 
Interference management is one of the key challenges of ISAC systems.  Integrating sensing and communication functionalities into a single hardware and software unit leads to interference between the sensing and communication tasks, which must be appropriately mitigated. When properly designed, ISAC offers significant advantages, including improved spectral efficiency and reduced hardware complexity for diverse applications \cite{Lu2023, Zhang2022a}. Nevertheless, the inherent coupling between communication and sensing in ISAC introduces unique security challenges \cite{Qu2023}. Unlike conventional communication systems, the information carried by ISAC signals can leak into the sensing operation, potentially revealing sensitive data to unintended receivers. Additionally, eavesdroppers can exploit the sensing information to gain insights into the transmitted communication messages. This issue becomes more pronounced when the target itself is an eavesdropper, aiming to intercept the information embedded in the transmit waveform \cite{Meng2023}. For radar purposes, the ISAC transmitter typically focuses energy toward the target using transmit beamforming to enhance estimation accuracy. However, this exposes the waveform to the malicious target, making it easier to intercept the information. Thus, the appropriate design of ISAC systems is crucial to secure sensitive information exchange while ensuring reliable target tracking.

\subsection{Prior Works}
In the last decade, various studies have been conducted on ISAC from different perspectives, encompassing fundamental analysis \cite{Gunlu2022}, waveform design \cite{zou2024energy}, resource allocation, joint beamforming, and capacity characterization \cite{he2023full}. Amid these advancements, the security challenges inherent in ISAC systems have garnered increasing attention. Recent work has shifted focus towards the modeling and design of secure ISAC systems through the lens of physical-layer security (PLS)  \cite{mukherjee2014principles, zhou2013physical, su2020secure, bazzi2023secure, Ren2023, Hou2023, mamaghani2024performance}.
For example,  \cite{su2020secure} investigated the PLS of a multiple-input multiple-output (MIMO) ISAC system, employing artificial noise (AN) to ensure transmission secrecy while tracking radar targets and considering scenarios with target location uncertainty and erroneous channel state information (CSI). Furthermore, \cite{bazzi2023secure} studied a full-duplex (FD) multi-eavesdropper ISAC scenario where a unified waveform was designed for power-efficient sensing and secure communication to achieve a desired level of secrecy under constraints on the integrated side lobe levels for sensing. A secure ISAC system was examined in \cite{Ren2023}, where a multi-antenna base station (BS) transmits confidential messages to a single-antenna downlink (DL) user while simultaneously sensing potential eavesdroppers. 
In \cite{Hou2023}, the authors optimized transmit beamforming in a secure ISAC framework with randomly located malicious targets.

Another avenue of research involves leveraging technologies such as UAVs for enhancing the security of ISAC systems \cite{Wu2023a, Zhao2022, Wang2022}. For example, 
\cite{Wu2023a} studied real-time UAV trajectory optimization for secure ISAC, where the UAV functions as both a legitimate transmitter and a radar receiver.  
The authors in \cite{Zhao2022} tackled the challenge of maximizing the secrecy energy efficiency in a UAV-assisted ISAC system, employing iterative algorithms for joint sensing, transmission time, and UAV location optimization.  
While UAV technology offers significant benefits, it could also pose threats when utilized for malicious purposes. This duality underscores the significant challenges faced in ensuring the security of ISAC systems, particularly when mobile adversaries are also targets that are being tracked by the system. Consequently, some research initiatives have emerged to address this important issue. In \cite{Wang2022}, the authors studied a sensing-assisted uplink (UL) communication framework between a single-antenna user and an FD-BS to combat an aerial eavesdropper with a predetermined trajectory. Using radar signals for target localization and jamming, the BS ensured secure UL transmission while maximizing received signal quality. Expanding upon \cite{Wang2022}, the authors in \cite{Liu2023b} considered multiple UL users against a mobile eavesdropper with an unknown trajectory, where radar signals from the BS are employed to track the aerial eavesdropper and estimate real-time CSI via an extended Kalman filter.

\textcolor{black}{
While some prior research in secure ISAC settings has considered mobile aerial targets (e.g.,  \cite{Wang2022, Liu2023b}), these studies assume passive aerial adversaries with either predetermined or unknown trajectories. In contrast, we focus on a critical and underexplored threat model: securing ISAC systems against \textit{mobile, proactive adversarial UAV targets}, referred to as M-UAVs, which dynamically adapt their trajectories in real time to maximize eavesdropping performance while avoiding detection. These agile M-UAVs exploit favorable air-ground (AG) channel conditions and react strategically to the legitimate network's defensive actions, which creates complex bidirectional interactions that fundamentally differ from prior studies with passive threats. Addressing this challenge necessitates a solution framework grounded in sequential decision-making and hierarchical game theory to capture the evolving strategic interplay between the legitimate system and the adversary.}

To this end, we adopt a game-theoretic perspective. Game theory \cite{han2019game} is a powerful mathematical tool for analyzing strategic interactions among rational decision-makers, or \textit{players}. By examining how one player’s decisions influence others, game theory can predict outcomes and inform strategic choices. In particular, non-cooperative game theory plays a crucial role in competitive dynamics, where players are not compelled to engage in cooperative behavior. A Stackelberg game is a sequential game where players are divided into leaders and followers. Leaders make the first move to optimize their utility function, and followers respond accordingly, intending to maximize their own. In the context of secure ISAC systems, a non-cooperative Stackelberg game (NSG) framework is an effective means of modeling the interactions between the competing parties, i.e., the legitimate system and mobile proactive target. While game theory has been applied in traditional communication systems to model adversarial behavior, to the best of our knowledge it has not been proposed for the problem of defending against agile mobile threats in ISAC systems. This gap motivates the current work.

\subsection{Contributions}
In this paper, we focus on an FD-ISAC system that wishes to improve its security against a mobile proactive adversary through an NSG. Our contributions are as follows:

\begin{itemize}
    \item  \textbf{Secure ISAC model with proactive mobile target}: We model a secure FD-ISAC system that accommodates multiple legitimate DL and UL communication users while contending with a mobile proactive aerial target. The mobile adversary significantly increases the security risks of the network due to the ability of the M-UAV to find favorable AG channels that improve its ability to eavesdrop on both DL and UL communications. 
    To enhance ISAC security, we leverage AN to not only facilitate the creation of efficient radar beams for improved sensing performance, but also degrade the M-UAV's reception quality, thereby enhancing overall system confidentiality.

    \item   \textbf{Game formulation for secure ISAC system}: We adopt a game-theoretic approach to model the sequential interaction between the legitimate network and the mobile M-UAV, facilitating robust secure transmission. Specifically, our NSG formulation models the M-UAV as the leader, optimizing its trajectory over the given mission duration to maximize its utility. This involves balancing power conservation with mission objectives such as minimizing the risk of detection and enhancing the network’s achievable secrecy rate.
    The legitimate network, acting as the follower to the M-UAV’s proactive movements, responds by dynamically adjusting its resource allocation to minimize network power consumption while satisfying the required sensing performance and maintaining both UL and DL secrecy constraints. 

    \item \textbf{SCA-DRL-empowered efficient solution}:
    To devise the optimal strategies for both the legitimate network and the M-UAV, we employ a Stackelberg equilibrium framework. This involves iteratively optimizing the strategies of both players to address the challenge of achieving a global optimum in this complex hierarchical game, ultimately ensuring steady-state convergence where the leader's optimal strategy is anticipated and countered by the follower. Specifically, we employ a novel combination of successive convex optimization (SCA) \cite{Boyd2006} and deep reinforcement learning (DRL) \cite{Sutton2018} to obtain optimal strategies for both players, ensuring efficient and robust system performance. In particular, SCA offers a low-complexity solution to finding the best power allocation by optimally designing the DL beamforming, UL transmissions, and sensing beamforming, while DRL efficiently tackles the sequential nature of the game in a dynamic environment. \textcolor{black}{   Beyond algorithmic design, we conduct a rigorous convergence and complexity analysis, showing that our proposed SCA–DRL system converges to a local Stackelberg equilibrium under standard
assumptions, based on a two-time-scale framework.  Simulation results further confirm that our approach not only converges reliably but also outperforms the relative benchmark schemes,  validating its suitability for deployment in secure ISAC environments.}

\end{itemize}

\subsection{Organization}
The remainder of this paper is structured as follows: \secref{sec1_sysModel} presents the FD-ISAC system model including a mobile proactive target, followed by the adopted performance indicators. In \secref{sec2_problem}, we formulate a game-theoretic optimization framework tailored to the considered system. Our proposed SCA-DRL solution is presented in \secref{sec3_solution}. Simulation results and technical discussions are given in \secref{sec4_simulations}, and conclusions are drawn in \secref{sec5_conclusions}.

\textit{Notation}:
Throughout the paper, the following notation is employed. The symbol $\mathbb{C}^{p \times q}$ denotes the space of $p$-by-$q$ complex-valued matrices and $\mathbb{R}$ indicates the set of real numbers. Calligraphic letters such as $\L$ represents a set whose cardinality is denoted as $|\mathcal{L}|$. Upper boldface letter $\A$ denotes a matrix, lower boldface letter $\a$ denotes a vector, and $\tr(\cdot)$, $\det(\cdot)$, $(\cdot)^\top$ and $(\cdot)^\dagger$ denote the trace, determinant, transpose, and complex conjugate transpose operations, respectively. The matrix $\I_N$ and vector $\mathbf{1}_N$ denote the identity matrix and the all-ones row vector of size $N$, respectively, $\|\x\|$ denotes the Euclidean norm of a complex-valued vector $\x$, the first partial derivative of $\x(t)$ with respect to (w.r.t.) $t$ is expressed as $\dot{\x}\triangleq \frac{\partial \x(t)}{\partial t}$, $\Re\{\cdot\}$ indicates the real part of an expression, $\otimes$ is the Kronecker product, and $\A\succeq0$ indicates that $\A$ is Hermitian-positive semidefinite (PSD). A circularly symmetric complex Gaussian multivariate distribution with mean vector $\pmb{\phi}$  and covariance matrix  $\mathbf{C}$ is denoted by $\CN(\pmb{\phi}, \mathbf{C})$, and $\E(Y)$ and $\mathrm{Var}(Y)$ are the expectation and variance of random variable $Y$. Finally, the operators $\sim$, $\in$, $\forall$, $\triangleq$, and $\setminus$ read as \textit{distributed as}, \textit{is a member of}, \textit{for all}, \textit{defined as}, and \textit{set minus}, respectively.

\section{System Model}\label{sec1_sysModel}

\begin{figure}[!t]
\centering
\includegraphics[width=\linewidth]{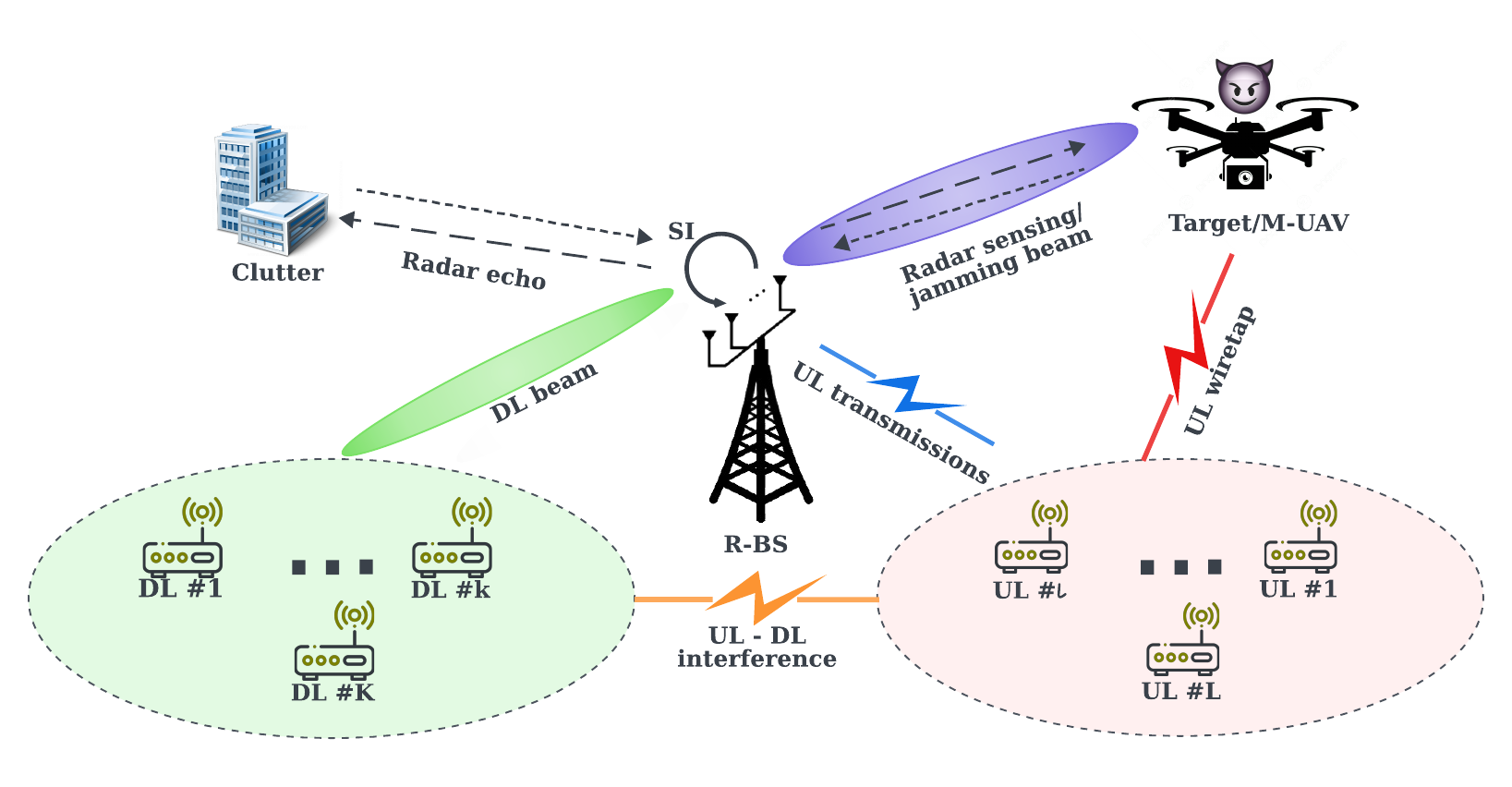}
\caption{Illustration of the ISAC system under consideration with an M-UAV.}
\label{fig_sim}
\end{figure}

We consider an ISAC system as exemplified in Fig. \ref{fig_sim}, where a dual functional multi-antenna FD-BS with radar sensing capabilities over the shared spectrum, also referred to here as a Radar-BS (R-BS),  aims to not only securely transmit confidential messages to $K$  single-antenna DL users, indexed by $\K=\{1, 2, \cdots, K\}$, but also simultaneously sense and track a target while receiving confidential uplink signals from a set of $L$  UL users, indexed by $\L=\{1, 2, \cdots, L\}$. Transmissions occur under the surveillance of an M-UAV equipped with a single receive antenna, which exploits its aerial mobility to intercept communications.

We assume that the R-BS is equipped with a uniform planar array (UPA) with $M_t= M^{Tx}_h \times M^{Tx}_v$ transmit antennas, where $M^{Tx}_h$ and $M^{Tx}_v$ indicate the number of antennas in the horizontal and vertical directions. The R-BS also has an $M_r = M^{Rx}_h \times M^{Rx}_v$ receive UPA for reception of confidential UL and radar echo signals, and thus operates in FD mode. The antennas on all devices are assumed to have omnidirectional radiation patterns. 
The R-BS, $K$ DL users, and  $L$ UL users are located at $\q_a=[x_a, y_a, 0]^\top$,  $\q_k=[x_k, y_k, 0]^\top, k\in\K$, and $\q_l=[x_l, y_l, 0]^\top, l\in\L$, respectively. 
Furthermore, we assume that the maximum mission duration of the M-UAV is \( T \). The M-UAV's trajectory is discretized into \( (N+1) \) waypoints, with \( N \) timeslots, indexed by \( \N = \{1, 2, \cdots, N\} \). Each timeslot has a duration of \( \delta_t \), satisfying \( N \delta_t \leq T \). We assume that the M-UAV can maneuver in three dimensions (3D) and express its time-varying 3D position at timeslot $n$ as $\q_e[n]=[x[n], y[n], z[n]]^\top$, with velocity $\v[n]=[v_x[n], v_y[n], v_z[n]]^\top$. We define $\v_{xy}[n]=[v_x[n],v_y[n]]^\top$ as the horizontal projection of the M-UAV's velocity at timeslot $n$. Although we assume that the location of the M-UAV is approximately fixed within each timeslot, it may vary from one timeslot to another as the M-UAV can adaptively move to improve wiretap performance. Here, we assume that the M-UAV is aware of the locations of the UL and DL users as well as the R-BS.

\subsection{Flight Power Consumption and Mobility Constraints}\label{sec:flightPower}
The power consumption of a rotary-wing UAV at a given time $n$, denoted by $P_f[n]$, can be quantified, according to classical aircraft dynamics theory as \cite{Mamaghani2021b}
\begin{align}\label{flightpow}
    P_{f}[n] &= {P_0\left(1+\frac{3\|\v_{xy}[n]\|^2}{U^2_{tip}}\right)} +{C_0 \|\v_{xy}[n]\|^3} + G_0 \|v_z[n]\|\nonumber\\
    &+ {P_1\left(\sqrt{1+\frac{\|\v_{xy}[n]\|^4}{4\nu^4_0}} - \frac{\|\v_{xy}[n]\|^2}{2\nu^2_0}\right)^{\frac{1}{2}}} , \quad\forall n
\end{align}
where $P_0$, $P_1$, and $\nu_0$ are constants representing respectively the \textit{blade profile power}, \textit{induced power}, and \textit{mean rotor-induced velocity} of the M-UAV in hovering mode, respectively. In addition,  $U_{tip}$ and $C_0$ are parameters related to aerodynamics, and $G_0$ is the weight of the M-UAV. 
The M-UAV must carefully control its movement in order to balance the requirement for high-performance signal interception with the need for manageable power consumption that extends its operational lifespan.

In our work, we impose the following constraints on the M-UAV's mobility \cite{mamaghani2023secure,TatarMamaghani2022}:
\begin{subequations}
\begin{align}
\mathrm{C1}:&~~~\q_e[1]=\q_i,~\q_e[N+1]\in  \mathbf{R}_f, N\delta_t \leq T\\
\mathrm{C2}:&~~~z^{\min}\leq z[n] \leq z^{\max},~\forall n\in\N\label{mobility_cst5}\\
\mathrm{C3}:&~~~\q_e[n] \in \R_o, \|\q_e[n] -  \{\q_o\}\| \geq d^{\mathrm{min}},~\forall n\in\N\label{mobility_cst6}\\
\mathrm{C4}:&~~~\q_e[n+1] = \v[n]\delta_t + \q_e[n],~\forall n\in\N\\
\mathrm{C5}:&~~~|x[n+1]-x[n]|\leq v^{\max}_{x}\delta_t,~\forall n\in\N\\
\mathrm{C6}:&~~~|y[n+1]-y[n]|\leq v^{\max}_{y}\delta_t,~\forall n\in\N\\
\mathrm{C7}:&~~~|z[n+1] - z[n]| \leq v^{{max}}_{z}\delta_t,~\forall n\in\N
\end{align}
\end{subequations}
where $\q_i \in \mathbb{R}^{3\times 1}$ represents the initial location of the M-UAV, $\mathbf{R}_f$ indicates the predetermined final location region where the M-UAV can land,  $z^{\min}$ and $z^{\max}$ are the minimum and maximum operational altitudes of the M-UAV, respectively, $\R_o$ indicates the M-UAV's permitted operating region, $\{\q_o\}$ indicates the location of obstacles in the region, $d^{\mathrm{min}}$ is the minimum distance that must be maintained from any obstacle to avoid collision, and $v^{{max}}_{x}$, $v^{{max}}_{y}$, and $v^{{max}}_{z}$ respectively indicate the maximum velocity of the M-UAV in the $(x,y,z)$ directions, assuming a constant speed per timeslot. 

\subsection{Full-Duplex Secure ISAC}

In the following, the time index $n$ from the signal representations is often dropped for conciseness.

\subsubsection{Communication channel models}

All communication channels are assumed to be quasi-static block fading such that the channel coefficients remain fixed in each timeslot and vary from one to another. 
We assume that the communication channels consist of two components, i.e., a large-scale fading component and a small-scale Rician fading component. We denote the base-band equivalent channels as
\begin{itemize}
    \item \(\h_{a,k} \in \mathbb{C}^{M_t \times 1}\): From the R-BS to the \order{k} DL user;
    \item \(\h_{\ell,a} \in \mathbb{C}^{M_r \times 1}\): From the \order{\ell} UL user to the R-BS;
    \item \(h_{\ell,e} \in \mathbb{C}\): From the \order{\ell} UL user to the M-UAV;
    \item \(h_{l,k} \in \mathbb{C}\): From the \order{\ell} UL user to the \order{k} DL user;
    \item \(\h_{e,a} \in \mathbb{C}^{M_r \times 1}\): From the M-UAV to the R-BS.
    \end{itemize}
Each entry of the vectors $\h_{a,k}$, $\h_{\ell, a}$, and $\h_{e, a}$ is denoted by the scalar $h_{i,j}$. The distance-dependent large-scale fading can be represented as
\begin{align}
    \eta_{i, j} = \begin{cases}
        \beta_0d^{-\alpha}_{i,j}, & \text{if LoS}\\
         \varsigma\beta_0d^{-\alpha}_{i,j}, & \text{if NLoS}
    \end{cases}
\end{align}
where $d_{i,j}=\|\q_{i}-\q_j\|$ is the distance between points $i$ and $j$, $\beta_0=({\lambda}/{4\pi})^2$ denotes the channel power gain at the reference distance of 1$\mathrm{m}$ in line-of-sight (LoS) conditions with wavelength $\lambda={C}/{f_c}$, where $C$ and $f_c$ denote the speed of light and the carrier frequency, respectively, $0 < \varsigma <1$ is the excess attenuation due to non-LoS (NLoS) propagation, and $\alpha$ is the path-loss exponent, which is typically set as 2 for AG channels. Note that the probability of existence of an LoS path is generally modeled as a function of elevation angle $\theta_{i,j}$ as follows:
\begin{align}
    P^{\mathrm{LoS}}_{i,j} = \frac{1}{1 + E_1 \exp\left(-E_2(\theta_{i,j}-E_1)\right)}, 
\end{align}
where $\theta_{i,j}=\frac{180}{\pi}\arcsin(\frac{z_i - z_j}{d_{i,j}})$, and $E_1$ and $E_2$ are constants depending on
the propagation environment such as rural, urban, or dense urban \cite{khuwaja2018survey}. Obviously, $P^{\mathrm{NLoS}}_{i,j} = 1 - P^{\mathrm{LoS}}_{i,j}$.

The small-scale Rician fading channel model consists of both LoS and NLoS components, expressed as
\begin{align}
g_{i, j}=\sqrt{\frac{\kappa_{i, j}}{\kappa_{i, j}+1}} g_{i, j}^{\mathrm{LoS}}+\sqrt{\frac{1}{\kappa_{i, j}+1}} g_{i, j}^{\mathrm{NLoS}},
\end{align}
where  $\kappa_{i, j}$ is the Rician K-factor and $g_{i,j}^{\mathrm{NLoS}} \sim\CN(0,1)$. Accordingly, each entry of the communication channels $h_{i,j}$ can be expressed as $h_{i,j} = \sqrt{\eta_{i,j}} g_{i,j}$, with expected channel power gain $\E(|h_{i,j}|^2)= \hat{P}^{\mathrm{LoS}}_{i,j} \beta_0d^{-\alpha}_{i,j}$, where $\hat{P}^{\mathrm{LoS}}_{i,j} \triangleq \varsigma+ (1-\varsigma)P^{\mathrm{LoS}}_{i,j}$ is the regularized LoS probability.

\subsubsection{Radar channel models}

The signals transmitted by the R-BS are reflected from the desired target (i.e., the M-UAV) and \textit{clutter}\footnote{Note that clutter refers to unwanted echoes that degrade the sensing performance. For instance, reflections from the environment, such as those from terrain, buildings, or precipitation, can mask weaker returns from target objects of interest.  In ISAC systems, interference from the radar component of the transmit signal also degrades the communications performance, necessitating effective interference management.}, and then received by the R-BS. We assume that there exists a dominant LoS component\footnote{\textcolor{black}{The communication and sensing channel models adopted in this work incorporate key practical aspects such as probabilistic LoS fading, radar clutter, and residual self-interference. However, certain environmental uncertainties, such as Doppler shifts, atmospheric attenuation, UAV turbulence, hardware impairments, and real-time CSI estimation errors, are not explicitly considered. These effects may introduce degradation in beamforming alignment and secrecy performance under highly dynamic conditions. Robust extensions incorporating bounded CSI error models or outage-based formulations are left for future investigation.}} between the R-BS and M-UAV (i.e., a Rician channel with $\kappa_{a, e}\rightarrow \infty$ and $P^{\mathrm{LoS}}_{a,e}\rightarrow 1$), which is a reasonable assumption for localization and tracking applications involving UAVs \cite{lyu2022joint, bazzi2023secure}. \textcolor{black}{In fact, accurate radar sensing fundamentally relies on a strong direct-path echo (i.e., LoS link). In contrast to communication systems, where multipaths are typically beneficial due to spatial diversity, NLoS contributions are usually detrimental to localization accuracy and sensing performance \cite{wei2024integrated}, and thus can be considered as part of the clutter.} 

Accordingly, the sensing channel between the R-BS and M-UAV is expressed as
$\hat{\A_0}= \zeta_0 \A_0$, where $\zeta_j$ models the  round-trip radar signal attenuation \cite{skolnik1980introduction}, and $\A_0 = \a_r \a^\dagger_t \in \C^{M_r\times M_t}$ with $\a_t\in\C^{M^{Tx}_hM^{Tx}_v\times 1}$ and $\a_r\in\C^{M^{Rx}_hM^{Rx}_v\times 1}$ representing the transmit and receive steering vectors, respectively. The transmit steering vector $\a_t$ is given by
\begin{align}
    &\a_t (\theta, \phi) = [1, \e^{\frac{j2\pi\delta}{\lambda}\sin\theta \cos\phi}, \cdots, \e^{\frac{j2\pi\delta (M^{Tx}_h-1)}{\lambda}\sin\theta \cos\phi}]^\top \nonumber\\
    &\otimes [1, \e^{\frac{j2\pi\delta}{\lambda}\sin\theta \sin\phi}, \cdots,\e^{\frac{j2\pi\delta (M^{Tx}_v-1)}{\lambda}\sin\theta \sin\phi}]^\top,
\end{align}
where  $\theta \in
[0, \pi/2]$ and $\phi\in [0, \pi]$ denote the elevation and azimuth angles of departure (AoDs) from the R-BS to the M-UAV, $\delta$ is the spacing between adjacent antennas, and $\lambda$ is the signal wavelength. The receive steering vector $\a_r$ is defined similarly. In addition, we consider the existence of $N_c$ clutter scatterers, with channels $\hat{\A_j}$ for $j=1,...,N_c$, that are characterized similar to $\hat{\A_0}$.

\subsubsection{Secure ISAC signal model}
The R-BS transmits
\begin{align}\label{transmitsignal_x}
    \x=\V \s+\s_r,
\end{align}
where the columns of 
$\V=[\v_1, \v_2,\cdots, \v_K]\in\C^{M_t\times K}$ are the beamforming vectors  for the DL users, $\s\in\C^{K\times1}$ indicates the information-bearing DL signal vector whose entries are assumed to satisfy $\E\left(\s \s^\dagger\right)=\I_K$, and $\s_r\in\C^{M_t\times1}$  is a dedicated radar signal, independent of $\s$, with
covariance matrix $\W = \E(\s_r\s^\dagger_r)$. Note that $\s_r$ has a dual role, serving not only as AN for jamming the M-UAV but also augmenting the degrees-of-freedom (DoFs) of the transmit signal, thereby enhancing sensing performance. 

\subsubsection{Received signal model at DL users}
The received signal at the \order{k} DL user 
is given by
\begin{align}\label{receivedsignal_yl}
    y_{k}&=\h_{a,k}^\dagger \x+\sum\limits_{\ell=1}^L h_{\ell,k} z_{\ell}+ n_{k}, \quad k\in\K
\end{align}
where $z_{\ell}\in\mathbb{C}$ indicates the UL symbol with power $\E(|z_{\ell}|^2)=p_{\ell}$ transmitted by user $\ell$, and $n_{k}$ is additive white Gaussian noise (AWGN) with variance $\sigma^2_{k}$ at the \order{k} DL user. Substituting \eqref{transmitsignal_x} into \eqref{receivedsignal_yl}, we rewrite $y_{k}$ as
\begin{align}\label{receivedsignal_yl_new}
     y_k&={\h_{a, k}^\dagger \v_{k} s_{k}}+\sum\limits_{\ell=1}^L h_{\ell, k} z_\ell+ \sum\limits_{\substack{k^{\prime}=1\\
    k^{\prime} \neq k}}^K \h_{a, k}^\dagger \v_{k^{\prime}} s_{k^{\prime}}\nonumber\\
    &+\h_{a, k}^\dagger \s_r+ n_{k},~\forall k\in\K
\end{align}
where the first term represents the desired signal, the second term reflects the UL-to-DL interference, the third term is inter-user interference, and the fourth is the sensing interference. 

\subsubsection{Received signal model at the R-BS}

Following self-interference suppression, the signal received at the R-BS can be expressed as
\begin{align}\label{yA}
    \y_a&= \sum\limits^L_{\ell=1}\h_{\ell,a}z_\ell+ \y_r+\n_{SI} + \n_a,
\end{align}
where the first term represents the desired UL communications, the second term $\y_r$ is superimposed echo reflections from the M-UAV and clutter, the third term is residual self-interference (SI) modeled as $\n_{SI}\sim\CN(\0, \sigma^2_{SI} \I_{M_r})$, and the last term is AWGN at the R-BS with $\n_a \sim \CN(\0, \delta^2_a\I_{M_r})$. We assume that the Doppler effect arising from the movement of the M-UAV is constant during one timeslot and can be compensated for \cite{liu2018mimo}. Thus, $\y_r$ is expressed as
\begin{align}
  \y_r=  \zeta_0 \A_0\x + \mathbf{c},
\end{align}
where $\mathbf{c}=\sum^{N_c}\limits_{i=1}\zeta_i\A_i \x \sim \CN(\0, \mathbf{R}_c)$ indicates the clutter, whose covariance matrix $\mathbf{R}_c$ is assumed to be fixed and known, based on the classical iterative method \cite{chen2022generalized}.

\subsubsection{Received signal model at the M-UAV}
The signal received at the M-UAV is given by
\begin{align}\label{y_e}
    y_e = \sum\limits^L_{\ell=1} h_{\ell,e} z_\ell +\h^\dagger_{a,e}  \x+ n_e,
\end{align}
where the first term indicates the superimposed UL signals, the second term is the signal transmitted by the R-BS, and the last term is AWGN at the M-UAV with variance $\sigma^2_e$.

\subsection{Communication Secrecy Rate} 
Since the M-UAV is exposed to not only UL communication signals but also DL signals from the R-BS,  we define the worst-case communication secrecy rates (CSRs) of the system for DL and UL communications as
\begin{align}
    \dcsr &= \underset{k\in\K}{\min}\left[\log_2 (1+\gamma^\dl_k) - \log_2(1 + \gamma^\dl_{e,a})\right]^+ \\[5pt]
    \ucsr &= \underset{\ell\in\L}{\min}\left[\log_2 (1+\gamma^\ul_\ell) - \log_2(1 + \gamma^\ul_{e,\ell})\right]^+,
\end{align}
respectively, which depend on the SINRs defined next.
First, $\gamma^{\dl}_{k}$ indicates the SINR of DL user $k$, which can be obtained using \eqref{receivedsignal_yl_new} as
\begin{align}
\gamma^{\dl}_{k}
=\frac{\h_{a, k}^\dagger \V_{k} \h_{a, k}}{
    \hspace{-5mm}\sum\limits_{\ell} p_{\ell}|h_{\ell,k}|^2+\sum\limits_{\substack{k^{\prime}=1\\k^{\prime} \neq k}}^K \h_{a, k}^\dagger \V_{k^{\prime}} \h_{a, k}+\h_{a, k}^\dagger \W \h_{a, k}+\sigma_{k}^2},
\end{align}
where $\V_k =\v_k\v^\dagger_k$. Here, we apply receive beamforming vectors $\u_\ell,~\forall\ell\in \L$ at the R-BS to recover
UL communication symbols from \eqref{yA} through $\hat{s}_\ell = \u^\dagger_\ell{\y}_a$. Accordingly, the SINR for the \order{\ell} UL user  at the R-BS, indicated by $\gamma^\ul_\ell$, is calculated as
\begin{align}\label{gamma_ul_l}
    \gamma^\ul_\ell= \frac{p_\ell\u^\dagger_\ell \h_{\ell,a} \h^\dagger_{\ell,a}\u_\ell}
    {\u^\dagger_\ell\left( \sum\limits_{\ell'\neq \ell}p_{\ell'}\h_{\ell',a}\h^\dagger_{\ell',a} +|\zeta_0|^2\A_0\S\A^\dagger_0 + \boldsymbol{\Sigma}_n\right)\u_\ell},
\end{align}
where $\boldsymbol{\Sigma}_n = \mathbf{R}_c+(\sigma^2_{SI}+\sigma^2_a)\I_{M_r}$ is the aggregated noise covariance and  $\S = \E(\x\x^\dagger) = \V\V^\dagger + \W$ indicates the covariance of the transmit signal $\x$. The SINR of the \order{\ell} UL user at the M-UAV, treating the DL signals from the R-BS as interference, is $\gamma^\ul_{e, \ell}$, which can be obtained using \eqref{y_e} as
\begin{align}
    \gamma^\ul_{e, \ell} = \frac{p_\ell |h_{\ell, e}|^2}{\sum\limits^L_{\ell'=1, \ell'\neq \ell} p_{\ell'}|h_{\ell',e}|^2 +\h^\dagger_{a, e} \S \h_{a, e} +\sigma^2_e}.
\end{align}
Likewise, $\gamma^\dl_{e, a}$ denotes the SINR of the R-BS signal at the M-UAV treating the UL signals as interference:
\begin{align}
    \gamma^\dl_{e, a} = \frac{\h^\dagger_{a,e}\V\V^\dagger\h_{a,e}}{\sum\limits^L_{\substack{\ell=1}} p_{\ell}|h_{\ell,e}|^2 +\h^\dagger_{a,e}\W\h_{a,e}+\sigma^2_e}.
\end{align}

\subsection{Sensing Estimation Performance}
After successfully decoding the UL users' messages, \eqref{yA} can be expressed as
\begin{align}\label{yA_tilde}
    \y_a(t)= \zeta_0\A_0 \x(t-\tau) + \n(t),
\end{align}
where $\tau$ is the round-trip delay and $\n(t)$ is the aggregated noise, following a Gaussian distribution $\n(t)\sim \CN(\0, \boldsymbol{\Sigma}_n)$. The target position can be obtained by estimating the range $R$, azimuth $\theta$, and elevation $\phi$. In this work, we focus on range estimation assuming that the other parameters are known. 
The distance from the radar transmitter to the aerial target can be obtained as $R={C\tau}/{2}$, and thus is determined by an estimate of the delay. A typical delay estimator is given by \cite{kay1993fundamentals}
\begin{align}
    \tilde{\tau} = \underset{\tau}{\mathrm{argmax}}{\int  \y_a(t) \x(t-\tau)\mathrm{dt}.}
\end{align}

\begin{theorem}\label{theorem1}
The Cramér-Rao Lower Bound (CRLB) for estimation of the delay $\tau$ in \eqref{yA_tilde} is approximately
\begin{align}
    \crlb({\tau}) \approx \frac{1}{8\pi^2 B^2_{rms} |\zeta_0|^2 \tr( \A_0 \S \A^\dagger_0 \boldsymbol{\Sigma}^{-1}_n)},
\end{align}
where $B_{rms}$ is the root-mean-squared (RMS) bandwidth of the transmit signal such that $2\pi B_{rms}=\gamma B$, where $B$ is the full bandwidth and $\gamma$ is a scaling based on the radar waveform power spectrum, e.g., $\gamma^2={\pi^2}/{3}$ for a flat spectral shape. 
\end{theorem}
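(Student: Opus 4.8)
The plan is to evaluate the Fisher information of $\tau$ from the model \eqref{yA_tilde} and then invoke the CRLB, relying on the asymptotic efficiency of the matched-filter/ML delay estimator $\tilde\tau$ at moderate-to-high SNR to justify the ``$\approx$''. Since $\n(t)$ is zero-mean circularly symmetric Gaussian with a $\tau$-independent (spatially coloured, spectrally flat over the band $B$) covariance $\boldsymbol{\Sigma}_n$, and only the mean $\mathbf{s}(t;\tau)\triangleq\zeta_0\A_0\x(t-\tau)$ carries the dependence on $\tau$, the Slepian--Bangs formula reduces to
\begin{align}
    I(\tau) = 2\Re\left\{\int \Bigl(\frac{\partial\mathbf{s}(t;\tau)}{\partial\tau}\Bigr)^{\!\dagger} \boldsymbol{\Sigma}^{-1}_n\, \frac{\partial\mathbf{s}(t;\tau)}{\partial\tau}\;\mathrm{d}t\right\}.
\end{align}

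Next I would substitute $\partial\mathbf{s}/\partial\tau=-\zeta_0\A_0\dot{\x}(t-\tau)$ and apply the change of variables $t\mapsto t-\tau$ (shift-invariance of the energy integral over the observation window, up to negligible edge effects); since the resulting integrand is a real nonnegative quadratic form, the $\Re\{\cdot\}$ becomes redundant, giving
\begin{align}
    I(\tau) = 2|\zeta_0|^2 \int \dot{\x}^{\dagger}(t)\,\M\,\dot{\x}(t)\;\mathrm{d}t, \qquad \M\triangleq\A^{\dagger}_0\boldsymbol{\Sigma}^{-1}_n\A_0 .
\end{align}
Passing to the frequency domain via Parseval's identity with $\dot{\x}(t)\leftrightarrow \mathrm{j}2\pi f\,\X(f)$ turns the right-hand side into $2|\zeta_0|^2\!\int 4\pi^2 f^2\,\X^{\dagger}(f)\,\M\,\X(f)\,\mathrm{d}f$, which by $\X^{\dagger}\M\X=\tr(\M\X\X^{\dagger})$ and cyclicity of the trace is a frequency-weighted version of $\tr(\A_0\X\X^{\dagger}\A^{\dagger}_0\boldsymbol{\Sigma}^{-1}_n)$.

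The crux, and the step I expect to need the most care, is collapsing this frequency integral into the product of a scalar bandwidth factor and the spatial quantity $\tr(\A_0\S\A^{\dagger}_0\boldsymbol{\Sigma}^{-1}_n)$ with $\S=\E(\x\x^{\dagger})$. For this I would invoke the modelling assumption that the (ensemble-averaged, unit-normalised) power spectrum $P(f)$ of the transmit waveform is common to all spatial streams, so that $\E[\X(f)\X^{\dagger}(f)]\propto\S\,P(f)$ and the integral separates as $\bigl(\int 4\pi^2 f^2 P(f)\,\mathrm{d}f\bigr)\,\tr(\A_0\S\A^{\dagger}_0\boldsymbol{\Sigma}^{-1}_n)$, with the observation interval absorbed into the normalisation (one of the sources of the ``$\approx$''). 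Identifying the first factor with the squared angular RMS bandwidth, $\int 4\pi^2 f^2 P(f)\,\mathrm{d}f=(2\pi B_{rms})^2=\gamma^2B^2$ --- which for a flat spectrum on $[-B/2,B/2]$ gives $B^2_{rms}=B^2/12$, i.e.\ $\gamma^2=\pi^2/3$, matching the stated value --- yields $I(\tau)\approx 8\pi^2 B^2_{rms}|\zeta_0|^2\,\tr(\A_0\S\A^{\dagger}_0\boldsymbol{\Sigma}^{-1}_n)$.

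It then remains to apply the CRLB, $\Var(\tilde\tau)\ge 1/I(\tau)$, and substitute, which gives the claimed expression for $\crlb(\tau)$. The approximation symbol accounts for (i) replacing the realised waveform energy by its ensemble average through $\S$ and $P(f)$, (ii) the asymptotic (high-SNR) efficiency of the ML/matched-filter estimator so that the bound is nearly attained, and (iii) the neglected edge effects and the residual clutter contributions already lumped into $\boldsymbol{\Sigma}_n$.
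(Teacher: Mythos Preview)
Your proposal is correct and follows essentially the same route as the paper: compute the Fisher information for the Gaussian model with $\tau$-dependent mean, differentiate $\mathbf{s}(t;\tau)=\zeta_0\A_0\x(t-\tau)$, and then use Parseval to replace the derivative energy by $4\pi^2B_{rms}^2\S$, yielding the stated CRLB. The only cosmetic difference is that you invoke the Slepian--Bangs identity directly, whereas the paper rederives it by differentiating the log-likelihood twice (and unnecessarily ``approximates'' away the $\ddot{\x}$ term, which in fact vanishes under expectation); your treatment of the spectral--spatial separation is also somewhat more explicit than the paper's one-line Parseval step.
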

\begin{proof}
    See Appendix \ref{Appendix_A}.
\end{proof}

Using Theorem \ref{theorem1}, the CRLB for the target range estimate $R$ is approximately given by 
\begin{align}\label{crlb_range}
    \crlb(R) \approx \frac{C^2}{8\gamma^2 B^2 |\zeta_0|^2 \tr( \A_0 \S \A^\dagger_0 \boldsymbol{\Sigma}^{-1}_n)},
\end{align}
which should be kept below a specific threshold for adequate localization. Eq. \eqref{crlb_range} shows that the CRLB for range estimation is inversely proportional to the bandwidth of the transmit signal, the strength of the sensing channel matrix $\A_0$, and the aggregated noise covariance \( \boldsymbol{\Sigma}_n \). A stronger transmit signal (i.e., higher energy in $\dot{\x}$) and a more favorable environment (i.e., less interference, clutter, and noise power) reduce the CRLB, leading to more accurate range estimation. 

\begin{remark}
    In general, the R-BS may not have perfect knowledge of the M-UAV's DoA. Estimation errors in azimuth \(\theta\) and elevation \(\phi\) can lead to angular mismatch in beamforming, reducing the echo signal power and degrading localization accuracy, i.e., increasing the CRLB. These errors can lead to misalignment of the generated AN/jamming beams, which leads to reduced achievable secrecy performance for DL transmission. To address angular estimation errors, one can incorporate robust beamforming methods directly into the resource‐allocation framework by, for example, formulating a worst‐case design over an ellipsoidal or sector‐based angle uncertainty set. Alternatively, each transmission interval can be subdivided into a brief beam‐training subframe, using coarse scanning or grid‐refinement techniques to update the DoA estimate, followed by the secure ISAC transmission phase based on the refined angular information. A rigorous treatment of robustness enhancement is reserved for future investigation.
\end{remark}

\section{Game Theoretic Optimization Framework}\label{sec2_problem}

To capture the effect of the sequential decision-making in the dynamic and hierarchical ISAC system, we use game theory to model the strategic interactions between the legitimate network (Player 1) and the adversary (Player 2). Player 1 corresponds to the R-BS, which acts as the only decision-maker in the legitimate network, determining the necessary beamformers and dictating UL transmissions. This centralized approach is justified by the structure of the legitimate network, where the R-BS leverages system-wide knowledge and computational resources to ensure efficient coordination. Accordingly,  Player 1 seeks to optimize a given utility function by jointly designing the UL transmission power set $\{\p[n],~\forall n\}$, where $\p[n]\triangleq[p_\ell[n],~\forall \ell]$ is a row vector containing the transmit power for each UL user,  transmit beamformers $\{\V[n], \forall n\}$,  radar sensing/AN covariance matrices $\{\W[n], \forall n\}$, and receive beamformers $\{\U[n], \forall n\}$, where $\u_\ell[n]$ constitutes the \order{\ell} column of $\U[n]\in\C^{M_r\times L}$. Conversely, Player 2, represented by the M-UAV, designs the waypoints $\{\q_e[n],~\forall n\}$ of its trajectory to proactively improve its own utility function and degrade the performance of the legitimate network.

We model the interaction between the players as an NSG, where we assume that the M-UAV operates as the leader, planning its trajectory over the mission time to maximize its utility, balancing power conservation with mission objectives such as minimizing detection risk and reducing the achievable secrecy rates of the network. As the leader, the movement of the M-UAV (Player 2) directly influences the actions of the R-BS (Player 1), which acts as the follower. In response to Player 2’s behavior, Player 1 adjusts its resource allocation, minimizing transmission power consumption while meeting certain security and sensing requirements. To solve this hierarchical NSG, we employ a DRL approach, where Player 2 learns a trajectory that maximizes its utility over the mission duration, sequentially taking into account the responses from Player 1 in terms of the network’s resource optimization. Our simulation results suggest that this iterative process converges to a Stackelberg equilibrium, ensuring a stable outcome where neither player has an incentive to deviate unilaterally. This equilibrium reflects the leader-follower dynamics and fulfills the system’s power and security requirements.

\subsection{Resource Optimization for Player 1}
As mentioned above, we consider power efficiency as the utility/payoff for the legitimate network, where we aim to minimize the total power invested in UL transmissions, DL beamforming, and AN.  To improve the power efficiency, we seek to find the best power allocation for the legitimate network that satisfies the achievable secrecy rates of both the UL and DL users, while ensuring the radar sensing performance.  Thus, we formulate Player 1's utility at timeslot $n$ as
\begin{align}
   U_1[n] =\tr(\V[n]\V^\dagger[n]) + \tr(\W[n]) + \mathbf{1}^\top_L \p[n],
\end{align}
where the first term represents the power invested in DL beamforming, the second term is the aggregate transmit power allocated to the radar sensing/AN, and the last term is the total transmit power of the UL users. 
Player 1 aims to minimize total network power consumption  w.r.t. the decision variable set $\Xs_1=\{\X[n]\triangleq(\V[n], \W[n], \U[n], \p[n]),~\forall n\}$, given the strategy of Player 2, represented by $\Xs_2=\{\q_e[n],~\forall n\}$. Accordingly, the optimization problem for Player 1 can be formulated as
\begin{subequations}\label{cst_player1}
\begin{align}
\Prob{1}:&~ \stackrel{}{\underset{\Xs_1}{\mathrm{min}}} \sum\limits^N_{n=1} U_1[n]\nonumber\\
&~~\text{s.t.}~~ \ucsr[n] \geq \rho^\ul,~\forall n \label{ul_cst}\\
&~~~\quad~~ \dcsr[n] \geq \rho^\dl,~\forall n \label{dl_cst} \\
&~~~\quad~~ \crlb[n] \leq \rho_{est},~\forall n \label{er_cst} \\
&~~~\quad~~  \p[n]\succeq 0,~\W[n] \succeq 0,~\forall n 
\end{align}
\end{subequations}
where constraints \eqref{ul_cst} and \eqref{dl_cst} ensure the minimum required secrecy rates for the UL and DL communications, denoted by $\rho^\ul$ and $\rho^\dl$, respectively, and \eqref{er_cst} guarantees that the target range estimation variance does not exceed the threshold $\rho_{est}$.

\subsection{Trajectory Optimization for Player 2}

The M-UAV seeks to minimize its total power consumption to extend its lifetime while adversely impacting communication and sensing performance. As such, Player 2's utility at timeslot $n$ is expressed as a function of the optimal strategy of Player 1 (i.e., $\mathbf{X}^\star[n],~\forall n$), and is given by
\begin{align}
   U_2(\mathbf{X}^\star[n]) &=  \lambda U_1(\mathbf{X}^\star[n]) - (1-\lambda) P_f[n],
\end{align}
where $0 \leq \lambda \leq 1$ is a scaling factor to balance performance and power consumption. Thus, Player 2's optimization problem can be formulated as
\begin{subequations}
\begin{align}
\Prob{2}:&~ \stackrel{}{\underset{\Xs_2}{\mathrm{max}}}  \sum^N_{n=1} U_2(\mathbf{X}^\star[n])
\nonumber\\
&~~\text{s.t.}~~ \mathrm{C1}-\mathrm{C7}.
\end{align}
\end{subequations}

\subsection{Rationale for Leader-Follower Roles}
\label{subsec:rationale}

The assignment of the M-UAV as the Stackelberg leader and the R-BS as the follower is not arbitrary but hinges on three pillars:

\begin{itemize}
    \item \textbf{Proactive Nature of Adversarial Threat:} The M-UAV, as a mobile eavesdropper, engages proactively by optimizing its trajectory to exploit favorable AG channels while conserving power and evading detection. This mirrors real-world adversarial engagements in battlefield and surveillance ISAC scenarios, where threats dictate dynamics and compel reactive defense by legitimate networks.
    
    \item \textbf{Computational Tractability:} Letting the R-BS be the follower transforms its resource allocation problem $\Prob{1}$ into a Mathematical Program with Equilibrium Constraints (MPEC). Specifically, the convex problem $\Prob{1.4}$, detailed in Section~\ref{sec3_solution}, defines a strongly monotone variational inequality (VI) mapping \cite{Han2019}, ensuring a unique and stable solution for each fixed M-UAV trajectory. In contrast, modeling the R-BS as the leader would require solving an Equilibrium Program with Equilibrium Constraints (EPEC), which is known to exhibit non-unique equilibria, lacks convergence guarantees, and entails excessive computational overhead.
    
    \item \textbf{Existence and Uniqueness of Equilibria:} Under the current formulation, the leader's discrete trajectory space is compact, and the follower's best response satisfies the existence and uniqueness conditions of a Stackelberg equilibrium \cite{Han2019}. These properties are crucial for robust decision-making in dynamic ISAC environments.
\end{itemize}
Therefore, the selection of the M-UAV as the Stackelberg leader is both practically and theoretically well-founded, enabling a solvable, reliable, and operationally meaningful game-theoretic formulation.

\section{Non-cooperative Stackelberg Game Solution}\label{sec3_solution}

\subsection{Solution to \texorpdfstring{$\mathcal{P}1$}~ via Convex Optimization}
To tackle Player 1's optimization problem, we divide $\Prob{1}$ into $N$ subproblems and optimize each of them separately at each timeslot $n$. As such, we have the following reformulated subproblem:
\begin{subequations}\label{prob1_recast}
\begin{align}
\Prob{1.1}:&~ \stackrel{}{\underset{\Xs_1}{\mathrm{min}}} \tr(\V\V^\dagger) + \tr(\W) + \mathbf{1}^\top_L \p \nonumber\\
&~~\text{s.t.}~~ \ucsr \geq \rho^\ul, \label{cst2}\\
&~~~\quad~~ \dcsr \geq \rho^\dl, \label{cst1}\\
&~~~\quad~~ \crlb \leq \rho_{est}, \label{cst3}\\
&~~~\quad~~  \W \succeq \mathbf{0},~\p \succeq \mathbf{0}. \label{cst_psd}
\end{align}
\end{subequations}

To introduce an iterative algorithm to solve $\Prob{1.1}$ in a more tractable way, we introduce the slack variable vector $\pmb{\mu} = \{\mu_i, i=1,..K+1\}$ and rewrite constraint \eqref{cst1} as
\begin{subequations}\label{cst_dl}
\begin{align}
    &\gamma^\dl_k \geq \mu_k,~\forall k\in\K\\
    &\gamma^\dl_{e,a} \leq \mu_{K+1},\\
    &\log_2(1+\mu_k) - \left[\log_2(1+\mu^{lo}_{K+1}) + \frac{\mu_{K+1}-\mu^{lo}_{K+1}}{\ln 2 (1+\mu^{lo}_{K+1})}\right]\nonumber\\
    &\geq \rho^{\dl},~\forall k\in\K \label{cst36c_cvx}
\end{align}\end{subequations}
where $\mu^{lo}_{K+1}$ indicates the local value of $\mu_{K+1}$ at each iteration. Note that the last inequality in \eqref{cst_dl} is convex and obtained via a first-order Taylor approximation. Similarly, constraint \eqref{cst2} is reformulated by introducing the slack variable vector $\pmb{\omega} =\{\omega_i,~ i=1\cdots, 2L\}$  as
\begin{subequations}\label{cst_ul}\begin{align}
    &\gamma^\ul_\ell \geq \omega_\ell,~\forall \ell\in\L\\
    &\gamma^\ul_{e,\ell} \leq \omega_{L+\ell}, ~\ell\in\L\label{cst_ul2}\\
    &\log_2(1+\omega_\ell) - \left[\log_2(1+\omega^{lo}_{L+\ell}) + \frac{\omega_{L+\ell}-\omega^{lo}_{L+\ell}}{\ln 2 (1+\omega^{lo}_{L+\ell})}\right]\nonumber\\
    &\geq \rho^{\ul},~\forall\ell\in\L\label{cst37c_cvx}
\end{align}\end{subequations}
where $\omega^{lo}_{L+\ell}$ indicates the local value of $\omega_{L+\ell}$ at each iteration. Thus, $\Prob{1.1}$ can be rewritten as

\begin{subequations}\label{prob1_recast2}
\begin{align}
\Prob{1.2}:&~ \stackrel{}{\underset{\Xs_1, \pmb{\mu}, \pmb{\omega}}{\mathrm{min}}} \tr(\V\V^\dagger) + \tr(\W) + \mathbf{1}^\top_L \p \\
&~~\text{s.t.}~~ \eqref{cst_dl}, \eqref{cst_ul}, \eqref{cst3}, \eqref{cst_psd}.
\end{align}
\end{subequations}

Since $\{\U[n],~\forall n\}$ only appears in the expression for $\gamma^\ul_\ell$ in \eqref{gamma_ul_l}, we first maximize each of these quantities. Note that this also indirectly reduces the transmit power required for satisfying the UL constraint \eqref{cst_ul}. In light of this, we need to solve the following optimization w.r.t. $\u_\ell,~\forall \ell$:
\begin{align}\label{GRayleighProb}
    \underset{\u_\ell}{\arg\max} \frac{p_\ell\u^\dagger_\ell \h_{\ell,a} \h^\dagger_{\ell,a}\u_\ell}
    {\u^\dagger_\ell\left( \sum\limits_{\ell'\neq \ell}p_{\ell'}\h_{\ell',a}\h^\dagger_{\ell',a} +|\zeta_0|^2\A_0\S\A^\dagger_0 + \boldsymbol{\Sigma}_n\right)\u_\ell}.
\end{align}
We note that \eqref{GRayleighProb} is in the form of a generalized Rayleigh quotient $\Rq(\B,\D;\x)\triangleq\frac{\x^\dagger \B \x}{\x^\dagger \D \x}$, where $\x$ is a non-zero vector, $\B$ is a Hermitian matrix, and $\D$ is a PSD matrix. Thus, the maximum value of $\Rq(\B, \D;\x)$  equals the largest generalized eigenvalue of $(\B, \D)$, achieved by the corresponding generalized eigenvector. However, since the numerator in \eqref{GRayleighProb}, i.e., $\B = \h_{\ell,a} \h^\dagger_{\ell,a}$, is a rank-one matrix, the solution to \eqref{GRayleighProb} can be obtained  in closed-form as
\begin{align}
    \u^*_\ell = \left( \sum\limits_{\ell'\neq \ell}p_{\ell'}\h_{\ell',a}\h^\dagger_{\ell',a} +|\zeta_0|^2\A_0\S\A^\dagger_0 + \boldsymbol{\Sigma}_n\right)^{-1} \h_{\ell,a},~\forall \ell.
\end{align}
Substituting $\u^*_\ell,~\forall \ell$ into the respective SINR expression of the \order{\ell} UL user, \eqref{gamma_ul_l} becomes
\begin{align}
    \widetilde{\gamma^\ul_\ell} &=\nonumber\\
    &\hspace{-5mm}p_\ell \h^\dagger_{\ell,a}\left( \sum\limits_{\ell'\neq \ell}p_{\ell'}\h_{\ell',a}\h^\dagger_{\ell',a} \hspace{-0.5mm}+\hspace{-0.5mm}|\zeta_0|^2\A_0\S\A^\dagger_0 + \boldsymbol{\Sigma}_n\hspace{-0.5mm}\right)^{-1}\hspace{-2mm}\h_{\ell,a}.
\end{align}
Thus, we can reformulate $\Prob{1.2}$ irrespective of $\U$ as
\begin{subequations}\label{prob1_recast3}
\begin{align}
\Prob{1.3}:&~ \stackrel{}{\underset{\{\V_k\}, \W, \p, \pmb{\mu}, \pmb{\omega}}{\mathrm{min}}} \sum^K_{k=1}\tr(\V_k) + \tr(\W) + \mathbf{1}^\top \p_\ell \nonumber\\
&~~\text{s.t.}~~ \eqref{cst_dl}, \eqref{cst3}, \eqref{cst_psd}, \eqref{cst_ul2}, \eqref{cst37c_cvx}\\
&~~~\quad~~ \widetilde{\gamma^\ul_\ell} \geq \omega_\ell,~\forall \ell \label{cst42b}\\
&~~~\quad~~ \V_k \succeq \mathbf{0},~\forall k\label{cst42c}\\
&~~~\quad~~ \rank(\V_k)=1,~\forall k\label{cst12_rnk1} .
\end{align}
\end{subequations}
The first term of $\Prob{1.2}$ has been replaced with the equivalent sum $\sum^K_{k=1}\tr(\V_k)$, where $\V_k\triangleq\v_k\v^\dagger_k$, to make it convex. Thus, constraint \eqref{cst12_rnk1} is imposed since the $\V_k$ are all rank-one matrices. This constraint will be dealt with below via semidefinite relaxation (SDR) \cite{Luo2010}.

Since the objective function of $\Prob{1.3}$ is now convex, we focus on the non-convex constraints. The first constraint in \eqref{cst_dl} can be rewritten by introducing a slack variable vector $\mathbf{t}=\{t_1, \cdots, t_{K+1}\}$, as
\begin{subequations}\label{cst43_cvx}
\begin{align}
    &\sum\limits_{\ell} p_{\ell}|h_{\ell,k}|^2 + \tr\left(\left[\sum\limits_{\substack{k^{\prime}=1\\k^{\prime} \neq k}}^K\V_{k^{\prime}} + \W\right]\H_{a,k}\right) + \sigma^2_k \nonumber\\
    &\leq  f_1(t_k,\mu_k; t^{lo}_k,\mu^{lo}_k),~\forall k\in\K\\
&t^2_k \leq\tr(\V_k \H_{a,k}),~\forall k\in\K
\end{align}
\end{subequations}
where   $\H_{a,k} \triangleq \h_{a,k} \h^\dagger_{a,k},~\forall k$, and the function $f_1$ is defined at the local point ($t^{lo}_k, \mu^{lo}_k$) as
\begin{align}
\hspace{-3mm} f_1(t_k,\mu_k; t^{lo}_k,\mu^{lo}_k) \triangleq    \frac{t^{lo}_k}{(\mu^{lo}_k)^2} \left(2\mu^{lo}_k t_k - t^{lo}_k \mu_k \right),~\forall k\in\K.
\end{align}

We reformulate the second constraint of \eqref{cst_dl} as a convex approximation at a given local point $t^{lo}_{K+1}$, leading to
\begin{subequations}\label{cst45_cvx}
    \begin{align}
    &\frac{t^2_{K+1}}{\mu_{K+1}}\leq {\sum\limits^L_{\substack{\ell=1}} p_{\ell}|h_{\ell,e}|^2 +\tr(\W\H_{a,e}) +\sigma^2_e},\\
    &\sum^K_{k=1}\tr(\V_k \H_{a,e}) \leq -(t^{lo}_{K+1})^2 + 2t_{K+1}(t^{lo}_{K+1}),
\end{align}
\end{subequations}
where $\H_{a,e} \triangleq \h_{a,e}\h^\dagger_{a,e}$. We note that the first inequality in \eqref{cst45_cvx} is a convex constraint since the left-hand-side (LHS) is a quadratic-over-linear term and the right-hand-side (RHS) is an affine expression w.r.t. the optimization variables. The second inequality is also convex due to the first-order relaxation. 

The non-convex constraint \eqref{cst_ul2} is reformulated to be convex by introducing a slack variable vector $\mathbf{s}=\{s_\ell, \forall \ell \in \L\}$:
\begin{subequations}\label{cst46_cvx}
\begin{align}
    &s^2_\ell \leq {\sum\limits^L_{\substack{\ell'=1\\ \ell'\neq \ell}} p_{\ell'}|h_{\ell',e}|^2 + \tr(\S\H_{a,e}) +\sigma^2_e},~\forall \ell \in \L\\
    &\frac{1}{\omega_{L+\ell}} \leq f_1 (s_\ell, \tilde{p}_\ell; s^{lo}_\ell, \tilde{p}^{lo}_\ell),~\forall \ell \in \L
\end{align}
\end{subequations}
where $\tilde{p}_\ell = p_\ell|h_{\ell, e}|^2$. 

The constraint \eqref{cst3} is rewritten as 
\begin{align}\label{cst47_cvx}
    \frac{C^2}{8\gamma^2 B^2 \rho_{est} }\leq \tr( \boldsymbol{Y} \S \boldsymbol{Y}^\dagger),
\end{align}
where $\boldsymbol{Y} \triangleq |\zeta_0|\boldsymbol{\Sigma}^{-\frac{1}{2}}_n \A_0$, and this constraint is convex since the trace function is affine in $\S$ and the sum of PSD matrices preserves convexity. 

We address the non-convexity of \eqref{cst42b} by defining
\[\Psi_\ell(\p,\S) \triangleq \sum\limits_{\ell'\neq \ell}p_{\ell'}\h_{\ell',a}\h^\dagger_{\ell',a} +|\zeta_0|^2\A_0\S\A^\dagger_0 + \boldsymbol{\Sigma}_n,~\forall \ell\in \L\]
and rewriting \eqref{cst42b} as
\begin{align}\label{cst48}
     \h^\dagger_{\ell,a} \Psi^{-1}_\ell(\p,\S)\h_{\ell,a} \geq \frac{\omega_\ell}{p_\ell},~\forall \ell\in \L.
\end{align}
Obviously, the LHS of \eqref{cst48} is convex w.r.t. $\Psi_\ell(\p,\S)$. Thus, following a first-order Taylor approximation, we obtain a global lower-bound at the local point $\Psi^{lo}_\ell(\p ^{lo}, \S^{lo})$ as
\begin{align}\label{cst49}
    \h^\dagger_{\ell,a} &\Psi^{-1}_\ell\h_{\ell,a} \geq 
    \h^\dagger_{\ell,a} [\Psi^{lo}_\ell]^{-1}\h_{\ell,a} \nonumber\\
    &- \h^\dagger_{\ell,a} [\Psi^{lo}_\ell]^{-1}(\Psi_\ell - \Psi^{lo}_\ell)[\Psi^{lo}_\ell]^{-1}\h_{\ell,a},~\forall \ell\in\L.
\end{align}
Using \eqref{cst49}, we reformulate \eqref{cst48} by introducing non-negative slack variables $\pmb{\iota}=\{\iota_l, \forall l\in \L\}$ as
\begin{subequations}\label{cst50_cvx}
\begin{align}
    &\Xi_\ell (\iota_\ell, \Psi_\ell)  \leq 0,~\forall \ell\in\L\\
    &\omega_\ell \leq -(\iota^{lo}_\ell)^2 +2\iota^{lo}_\ell\iota_\ell,~\forall \ell\in\L
\end{align}
\end{subequations}
where 
\begin{align}
    \Xi_\ell (\iota_\ell, \Psi_\ell) &\triangleq\frac{\iota^2_\ell}{p_\ell} + \h^\dagger_{\ell,a} [\Psi^{lo}_\ell]^{-1}\Psi_\ell [\Psi^{lo}_\ell]^{-1}\h_{\ell,a} \nonumber\\&
    - 2\h^\dagger_{\ell,a} [\Psi^{lo}_\ell]^{-1}\h_{\ell,a},~\forall \ell\in\L.
\end{align}
Notice that $\Xi_\ell$ is a convex function w.r.t. optimization variables $\pmb{\iota}, \p, \V, \W$. The first derivatives of $\Xi_\ell$ are as follows:
\begin{align}
    &\frac{\partial  \Xi_\ell }{\partial \iota_\ell} = \frac{2\iota_\ell}{p_\ell},~ \frac{\partial  \Xi_\ell }{\partial p_\ell} = \frac{-\iota^2_\ell}{p_\ell},~\forall \ell\in\L\nonumber\\
    &\frac{\partial  \Xi_\ell }{\partial \p_{\ell'}} = \h^\dagger_{\ell,a} [\Psi^{lo}_\ell]^{-1}\h_{\ell', a}\h^\dagger_{\ell', a}[\Psi^{lo}_\ell]^{-1}\h_{\ell,a},~\forall \ell' \in \L\setminus \{\ell\}  \nonumber\\
    &\frac{\partial  \Xi_\ell }{\partial \V_k} = \frac{\partial  \Xi_\ell}{\partial \W} = |\zeta_0|^2\h^\dagger_{\ell,a} [\Psi^{lo}_\ell]^{-1}\A_0\A^\dagger_0[\Psi^{lo}_\ell]^{-1}\h_{\ell,a},~\forall \ell\in\L. \nonumber
\end{align}
The second derivatives of the twice-differentiable function $\Xi_\ell$ are zero, except $\frac{\partial^2  \Xi_\ell }{\partial \iota^2_\ell} = \frac{2}{p_\ell}$ and $\frac{\partial^2  \Xi_\ell }{\partial p_\ell} = \frac{\iota^2_\ell}{p^2_\ell}$. This implies that the Hessian $\nabla^2  \Xi_\ell$ is an all-zero matrix except for the \order{\ell} block-diagonal entries which are non-negative. Thus, $\nabla^2  \Xi_\ell$ is PSD and $\Xi_\ell$ is convex.  

Finally, following the SDR approach, we drop the rank-one constraint in $\Prob{1.3}$, leading to the convex optimization problem
\begin{subequations}\label{prob1_cvx}
\begin{align}
\Prob{1.4}:&~ \stackrel{}{\underset{\V, \W, \p, \pmb{\mu}, \pmb{\omega}, \mathbf{t}, \mathbf{s}, \pmb{\iota}}{\mathrm{min}}} \sum^K_{k=1}\tr(\V_k) + \tr(\W) + \mathbf{1}^\top_L \p \nonumber\\
&~~\text{s.t.}~~ \eqref{cst43_cvx}, \eqref{cst45_cvx}, \eqref{cst36c_cvx}, \eqref{cst47_cvx}, \eqref{cst_psd},\nonumber\\
&~~~\quad~~ \eqref{cst46_cvx}, \eqref{cst37c_cvx}, \eqref{cst50_cvx}, \eqref{cst42c}.\nonumber
\end{align}
\end{subequations}
Problem $\Prob{1.4}$ can be efficiently solved by standard convex optimization tools such as CVX \cite{CVXResearch2012}.
\textcolor{black}{In general, post-processing techniques such as Gaussian randomization\,\cite{Luo2010} or rank-minimization methods\,\cite{TatarMamaghani2022} are needed to recover the rank-one solution from the SDR. Nevertheless, for Problem $\Prob{1.4}$, the relaxation is \textit{provably tight} (see Appendix~\ref{app:rank1} for detailed proof); consequently, the relaxed solution is indeed optimal.} We now summarize the full solution to Player 1's resource optimization problem in Algorithm \ref{algo1}.

\begin{algorithm}
\caption{\label{algo1} Convex Solution to Resource Optimization}
\begin{algorithmic}[1]
    \State \textbf{Input:} Convergence threshold \( \varepsilon \)
    \State \textbf{Initialize:} 
    \Statex \quad Set \( \V^{lo} \) and \( \W^{lo} \) as normalized rank-one Hermitian PSD matrices with some positive power vector \( \p^{lo} \).  
    \Statex \quad Initialize feasible slack variables \( \pmb{\mu}, \pmb{\omega}, \mathbf{t}, \mathbf{s}, \pmb{\iota} \).
    \Statex Solve the following feasibility problem using CVX
    \begin{align*}
       & \text{Find:} \quad  (\V,  \W, \p) \\
        &\text{subject to: } \quad \eqref{cst43_cvx}, \eqref{cst45_cvx}, \eqref{cst36c_cvx}, \eqref{cst47_cvx}, \eqref{cst_psd}, \\
        & \quad\quad\quad \eqref{cst46_cvx}, \eqref{cst37c_cvx}, \eqref{cst50_cvx}, \eqref{cst42c}.
    \end{align*}
    \Statex \quad Extract the solution as initialization for \(\Prob{1.4}\)   
    \Statex \quad Calculate \(U^{(0)}_1\) and set iteration index \( i \gets 1 \)  
    \Repeat
        \State \quad Solve the convex problem \( \Prob{1.4} \).

        \State \quad Extract solution \( \V^{(i)}, \W^{(i)}, \p^{(i)}, \pmb{\mu}^{(i)}, \pmb{\omega}^{(i)}, \mathbf{t}^{(i)}\), \(\mathbf{s}^{(i)}, \pmb{\iota}^{(i)} \). Then  calculate \( U^{(i)}_1 \) and the fractional error: 
        \[f_{err} = \frac{|U^{(i-1)}_1 - U^{(i)}_1|}{U^{(i-1)}_1},\]
        
        \State \quad \(i \gets i + 1\).
        \Until{Convergence criterion is met: $f_{err} \leq \varepsilon$}
    
    \State \textbf{Output:} Optimal solutions \( \V^\star \), \( \W^\star \), \( \p^\star \).
\end{algorithmic}
\end{algorithm}

\subsection{Solution to \texorpdfstring{$\mathcal{P}2$}~ via DRL}
To address the optimization problem for Player 2, we employ an advanced reinforcement learning algorithm, where Player 2 learns to adaptively optimize its trajectory over the mission duration in response to the near-optimal strategies of Player 1. Given the inherent complexity and dynamics of the problem, DRL is a suitable choice for the following reasons.

\begin{itemize}
    \item \textbf{Dynamic and sequential interactions:} The interaction between Player 1 and Player 2 follows a sequential decision-making process, typical of a Stackelberg game. DRL is well suited to model such sequential decision-making processes by iteratively learning optimal policies based on environmental feedback.

    \item \textbf{High-dimensional and non-convex strategy space:} While Player 1’s resource allocation is solved using a low-complexity SCA approach, optimizing the M-UAV’s trajectory involves navigating a high-dimensional, non-convex space. DRL effectively handles such complexity without requiring convexity assumptions.  

    \item \textbf{Adaptability to dynamic environments:} DRL operates in a model-free manner, making it robust against changing channel conditions, adversary behavior, and other dynamic factors in the ISAC system.  

    \item \textbf{Scalability and real-time efficiency:} Unlike traditional optimization methods that can be computationally prohibitive, DRL leverages neural networks to approximate policies, enabling real-time trajectory planning and scalability to large-dimensional systems.  
\end{itemize}
These advantages make DRL a compelling choice for solving problem $\Prob{2}$ by enabling the M-UAV to iteratively optimize its trajectory while accounting for Player 1’s strategies. Our results indicate that the solution converges, ensuring a stable and efficient outcome.  The following sections detail the components necessary for implementing the DRL solution.

\subsection{MDP Framework}
We model the M-UAV’s trajectory optimization as a Markov Decision Process (MDP). The MDP serves as the foundation for training the DRL agent, establishing a relationship
between interaction-based learning and goal achievement \cite{Sutton2018}, and is typically represented by the tuple 
$(\mathcal{S}, \mathcal{A}, \mathtt{r}, \gamma)$, whose components are defined below.
\subsubsection{{State space} \texorpdfstring{\( \mathcal{S} \)}{State Space}}
 Each state \( \mathtt{s}_n \in \mathcal{S} \) encodes the M-UAV’s 3D position \(\q_e[n]\) at time step \( n \), which satisfies constraints $\mathrm{C2}-\mathrm{C3}$, remaining mission time  \(t_{\mathrm{rem}} = T - n\delta_t\), and the last observed resource allocation from the legitimate network \(\mathbf{X}^\star_1[n]\). The state space is defined as  $\mathcal{S}=\{\q_e[n], t_{\mathrm{rem}}[n], \mathbf{X}^\star_1[n],~\forall n\}$, which is a finite set since the M-UAV locations are defined on a grid of size $N_{grid}\times N$, which depends on the locations and remaining time steps, i.e., \( \mathtt{s}_n = \{\q_e[n], t_{\mathrm{rem}}[n]\} \), since $\mathbf{X}^\star_1[n],~\forall n$ is a deterministic function of $\q_e[n]$ for the given channel realization.

\subsubsection{{Action space} \texorpdfstring{\( \mathcal{A} \)}{A}} The M-UAV’s action \( \mathtt{a}_n \in \mathcal{A} \) at each time step consists of its motion in six possible directions—up ($+z$), down ($-z$), forward ($+x$), backward ($-x$), left ($-y$), right ($+y$)—with three discrete speed levels for each: Low (\(\mathtt{v_{low}}\)), medium (\(\mathtt{v_{med}}\)), and high (\(\mathtt{v_{high}}\)). Thus, \( \mathcal{A} = \mathtt{\{dir_i, vel_j,~\forall i,j}\} \), where \( \mathtt{dir=\{U, D, F, B, L, R\}} \) represents the direction set and \( \mathtt{vel=\{v_{low}, v_{med}, v_{high}\} }\) is the speed set. The speed levels are chosen such that the action set $\mathcal{A}$ satisfies constraints $\mathrm{C4}-\mathrm{C7}$. 
In the deterministic environment considered, the state transition probabilities satisfy $\Pr(\mathtt{s}'|\mathtt{s}, \mathtt{a})=1$, meaning the next state is fully determined by the current state and action.

\subsubsection{{Reward function} \texorpdfstring{\( \mathtt{r}(\mathtt{s}, \mathtt{a}):\mathcal{S}\times\mathcal{A}\rightarrow \mathbb{R} \)}{R}}
To balance the main objective of maximizing \( U_2 \) with the secondary goals of encouraging progress towards the final spherical region location with radius $r_{f}$ centered at \( \q_f \), avoiding obstacles, and completing the mission within the allowed time \( T \), we design the reward function as described below.
\begin{enumerate}[label=(\roman*)]
    \item {Primary objective reward:} The core of the reward function is based on the objective function of $\Prob{2}$, given by
    \begin{align}
        \mathtt{r}_\mathrm{main} = \left(\frac{t_{\mathrm{rem}}}{T}\right) U_2(\mathbf{X}^\star(\q_{\mathtt{next}})).
    \end{align}
    \item {Distance-based penalty:} To encourage intermediate progress towards the goal, we define the following distance-based penalty term:
    \begin{align}
        \mathtt{p}_\mathrm{dist} = - \left(1 + \frac{T - t_{\mathrm{rem}}}{T}\right) \frac{\| \q_{\mathrm{next}} - \q_f \|}{\| \q_i - \q_f \|}.
    \end{align}
    \item {Boundary violation and obstacle collision penalty:} To ensure the M-UAV operates in the permitted region $\R_o$ and avoids collisions with environmental obstacles, a penalty $\mathtt{p}_\mathrm{fail}$ is applied if an obstacle is hit or the agent attempts to move outside the region. This penalty discourages the M-UAV from choosing risky trajectories that could lead to collisions, thus promoting safer navigation.
    \item {Goal completion reward:} If the M-UAV reaches the terminal region by the end of the mission time, i.e.,  $t_{rem}\geq0$, it receives a bonus  $\mathtt{r}_\mathrm{goal}$ to encourage satisfaction of constraint $\mathrm{C1}$.
\end{enumerate}
Thus, the instantaneous reward \( \mathtt{r} \) at time step $n$ is 
\begin{align}\label{reward}
\mathtt{r}_n &= \eta_1 \mathtt{r}_\mathrm{main}  + \eta_2 \mathtt{p}_\mathrm{dist} \nonumber\\
&+
\begin{cases} 
\mathtt{r}_\mathrm{goal}, & \text{if}~~\|\q_{\mathrm{next}}-\q_f\| \leq r_f~\text{and}~t_{\mathrm{rem}} \geq 0, \\ 
\mathtt{p}_\mathrm{fail}, & \text{if}~~\|\q_{\mathrm{next}}-\q_f\| > r_f~\text{and}~t_{\mathrm{rem}} = 0, \\ 
\mathtt{p}_\mathrm{fail}, & \text{if}~~\q_{\mathrm{next}}\notin \R_o, \\ 
\mathtt{p}_\mathrm{fail}, & \text{if}~~\|\q_{\mathrm{next}}-\{\q_o\}\| < d^{\mathrm{min}}, \\0, & \text{otherwise}
\end{cases}
\end{align}
where $\eta_1$ and $\eta_2$ are chosen scaling factors.

\subsubsection{{Discount factor} \texorpdfstring{$\gamma$}{gamma}} 
The discount factor \(\gamma\) controls the trade-off between immediate and long-term rewards. Higher values of \(\gamma\) place more emphasis on future rewards, encouraging long-term strategies, while lower values prioritize immediate rewards. The agent's objective is to maximize the discounted cumulative reward (return):
 \begin{align}G_n = \sum_{k=0}^\infty \gamma^k \mathtt{r}_{n+k}, \end{align}
where \( \gamma \in [0, 1) \) is the discount factor.

The MDP framework\footnote{The grid‐based, deterministic MDP studied here is intended as a proof‐of‐concept. Future work will consider extensions to continuous action‐spaces under stochastic and partially observable dynamics, incorporate actor–critic DRL methods, and employ generative learning for adaptive reward shaping to generalize across diverse secure ISAC scenarios.} characterizes the environment in which the DRL agent is trained, and the agent's goal is to learn the optimal policy that describes transitions between the current state and the next state based on the action set. The agent aims to maximize the expected return, which is measured by the discounted cumulative reward. 

\subsection{DRL Agent Framework}
In this work, we employ an advanced DRL architecture, known as the Double Deep Q-Network (DDQN) \cite{van2016deep}, to learn the transition policy, and thus optimize the M-UAV’s trajectory in the considered MDP framework.  The DDQN architecture combines neural networks for effective learning with experience replay and target networks for stable updates. This can enable the DRL agent to converge to a Stackelberg equilibrium, enabling a stable and efficient solution to the hierarchical game while meeting the system's power and security requirements. In what follows, we detail the adopted DDQN agent and the training process.

\subsubsection{Double DQN agent}
DDQN is a model-free off-policy reinforcement learning algorithm designed for environments with discrete action spaces \cite{van2016deep}. The DDQN agent trains a Q-value function that estimates the expected discounted cumulative long-term reward associated with following an optimal policy. A key innovation of DDQN is its ability to mitigate the overestimation bias commonly encountered in traditional Q-learning approaches \cite{mamaghani2020intelligent}. This is accomplished by decoupling action selection from action evaluation, leading to a more stable learning environment.

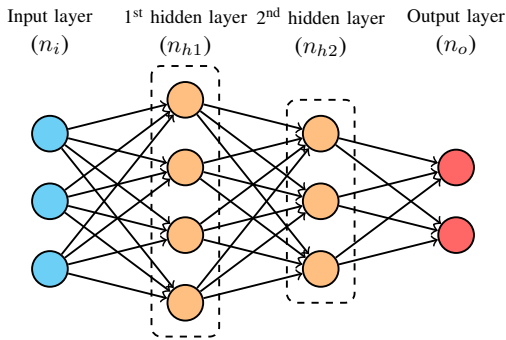
\begin{figure}[t]
    \centering
        \scalebox{0.9}{
    \begin{tikzpicture}[
        neuron/.style={circle, draw, minimum size=15pt, inner sep=0pt},
        input neuron/.style={neuron, fill=cyan!50},
        hidden neuron/.style={neuron, fill=orange!50},
        output neuron/.style={neuron, fill=red!60},
        layer/.style={minimum width=2.5cm},
        ->, thick
    ]
        \node[input neuron] (I-1) at (0, 1) {};
        \node[input neuron] (I-2) at (0, 0) {};
        \node[input neuron] (I-3) at (0, -1) {};
        \node[align=center] at (0, 2.5) {\footnotesize Input layer \\ ($n_i$)};

        \node[hidden neuron] (H1-1) at (2, 1.5) {};
        \node[hidden neuron] (H1-2) at (2, 0.5) {};
        \node[hidden neuron] (H1-3) at (2, -0.5) {};
        \node[hidden neuron] (H1-4) at (2, -1.5) {};
        \node[align=center] at (2, 2.5) {\footnotesize 1\textsuperscript{st} hidden layer \\ ($n_{h1}$)};

        \draw[dashed, rounded corners] (1.5, 2) rectangle (2.5, -2);

        \node[hidden neuron] (H2-1) at (4, 1) {};
        \node[hidden neuron] (H2-2) at (4, 0) {};
        \node[hidden neuron] (H2-3) at (4, -1) {};
        \node[align=center] at (4, 2.5) {\footnotesize 2\textsuperscript{nd} hidden layer \\ ($n_{h2}$)};
        \draw[dashed, rounded corners] (3.5, 1.5) rectangle (4.5, -1.5);

        \node[output neuron] (O-1) at (6, -0.5) {};
        \node[output neuron] (O-2) at (6, 0.5) {};
        \node[align=center] at (6, 2.5) {\footnotesize Output layer \\ ($n_o$)};

        \foreach \i in {1,...,3} {
            \foreach \j in {1,...,4} {
                \draw[->] (I-\i) -- (H1-\j);
            }
        }

        \foreach \i in {1,...,4} {
            \foreach \j in {1,...,3} {
                \draw[->] (H1-\i) -- (H2-\j);
            }
        }

        \foreach \i in {1,...,3} {
             \foreach \j in {1,2} {
                \draw[->] (H2-\i) -- (O-\j);
            }
        }

\end{tikzpicture}}
\caption{Illustration of the DNN architecture adopted in the DDQN framework.}
\label{fig:nn_structure}
\end{figure}

The architecture of the DDQN comprises two separate neural networks: the primary Q-network (online network) and the target network. The primary Q-network is responsible for selecting actions based on the current state, while the target network is employed to provide stable Q-value updates. Both networks are structured as deep neural networks (DNNs), designed to approximate the Q-value function, denoted as \( Q(\mathtt{s}, \mathtt{a}; \pmb{\vartheta}) \), where \(\mathtt{s} \) represents the M-UAV’s current state, \( \mathtt{a} \) is the selected action, and \( \pmb{\vartheta} \) encapsulates the network parameters. As illustrated in Fig. \ref{fig:nn_structure}, the fully-connected network consists of an input layer with \( n_i \) neurons representing the dimensions of the state/feature space, two hidden layers with \( n_{h1} \) and \( n_{h2} \) neurons, each employing ReLU activation functions to introduce non-linearity and enable the network to learn complex state-action mappings with minimal complexity, and an output layer with \( n_o \) neurons, corresponding to the number of possible actions. The target network shares the same architecture but has parameters \( \pmb{\vartheta}^- \) that are updated less frequently, stabilizing the training process by providing consistent Q-values throughout the iterations.

A key aspect of DDQN is its two-step action evaluation mechanism. In each training step, the online network selects the action \( \mathtt{a}' \) using the $\epsilon$-greedy approach to balance exploration and exploitation, defined as
   \begin{align}\label{epsilon-greedy}
\hspace{-3mm}\mathtt{a}' = 
    \begin{cases} 
    \arg\max\limits_{\mathtt{a} \in \mathcal{A}} Q(\mathtt{s}, \mathtt{a}; \pmb{\vartheta}), & \text{with probability } 1 - \epsilon, \\
    \text{a random action}\in\mathcal{A}, & \text{with probability } \epsilon.
    \end{cases}
    \end{align}
The target network then evaluates this action by computing the Q-value \( Q(\mathtt{s}', \mathtt{a}'; \pmb{\vartheta}^-) \). This helps reduce the risk of overestimating the Q-values and enhances the stability of the learning process.

The loss function for the DDQN is formulated as
\begin{align}\label{loss_function}
\hspace{-3mm}L(\pmb{\vartheta}) = \mathbb{E}_{(\mathtt{s}, \mathtt{a}, \mathtt{r}, \mathtt{s}')} \left( \left[ \mathtt{r} + \gamma Q(\mathtt{s}', \mathtt{a}'; \pmb{\vartheta}^-) - Q(\mathtt{s}, \mathtt{a}; \pmb{\vartheta}) \right]^2 \right).
\end{align}
Using \( \pmb{\vartheta}^- \) for the Q-value evaluation, the DDQN effectively minimizes the variance of the Q-values. 
To further enhance sample efficiency in the training data, the DDQN employs an experience replay buffer. This buffer stores past transitions \( (\mathtt{s}, \mathtt{a}, \mathtt{r}, \mathtt{s}') \), allowing the agent to randomly sample mini-batches during each training episode. This methodology enables the agent to learn from a diverse set of experiences, reducing the likelihood of overfitting to recent transitions. 
Moreover, the parameters of the target network \( \pmb{\vartheta}^- \) are periodically synchronized with those of the online network \( \pmb{\vartheta} \) at a predefined update frequency \( f^{\mathtt{update}} \). This strategy provides a slowly moving target for Q-value updates, which is crucial for stabilizing the learning process and further reducing the variance in the target Q-values. Together, these components form a robust framework for the DDQN agent, facilitating effective trajectory optimization in dynamic environments.

The training process of our proposed Stackelberg game problem is summarized in Algorithm \ref{algo2}.

\begin{algorithm}
\caption{\label{algo2} DDQN Solution for Trajectory Optimization}
\begin{algorithmic}[1]
    \State \textbf{Input:} Initial Q-network parameters \( \pmb{\vartheta} \), experience replay buffer \( \mathcal{D} \), exploration rate \( \epsilon \), discount factor \( \gamma \), and maximum number of  training episodes $\mathrm{epMax}$
    
    \State \textbf{Initialize:} Q-network \( Q(\mathtt{s}, \mathtt{a}; \pmb{\vartheta}) \) with random weights, target network \( Q(\mathtt{s}, \mathtt{a}; \pmb{\vartheta}^-) \) with parameters \( \pmb{\vartheta}^- = \pmb{\vartheta} \)

\While{episode $\leq \mathrm{epMax}$}
    \State Initialize M-UAV starting state \( \mathtt{s} = (\q_i, t_{\mathrm{rem}} = N) \)    
        \For{each time step \( n \) within the episode}
            \State Observe current state \( \mathtt{s} \)
            \State Select action $\mathtt{a}$ via \eqref{epsilon-greedy}, perform the action and observe new state
            $\mathtt{s}'=(\q_\mathrm{next}, t_{\mathrm{rem}} \gets t_{\mathrm{rem}}-\delta_t)$
            \State Compute reward \( \mathtt{r} \) for new state $\mathtt{s}'$ using \eqref{reward}
            \State Store transition \( (\mathtt{s}, \mathtt{a}, \mathtt{r}, \mathtt{s}') \) in replay buffer \( \mathcal{D} \)
            
            \State Sample a mini-batch from \( \mathcal{D} \), i.e., \{\( (\mathtt{s}, \mathtt{a}, \mathtt{r}, \mathtt{s}') \)\} 
            \State Update Q-network using Bellman equation  \eqref{loss_function} 
            \If{target network update condition met}
                \State Update target network parameters: \( \pmb{\vartheta}^- \leftarrow \pmb{\vartheta} \)
            \EndIf
            \If{$\text{IsDone}= \|\q_{\text{next}} - \q_f\| \leq r_f \ \textbf{or} \ t_{\text{rem}} = 0$}
                \State \textbf{Break}
            \EndIf
        \EndFor
        
        \State Decay \( \epsilon \) for reduced exploration in future episodes

    \State episode $\gets$ episode + 1,  
 \EndWhile
\end{algorithmic}
\end{algorithm}

\begin{table}[t]
\centering
\caption{\label{table1}Simulation Parameters}
\scalebox{0.9}{%
\begin{tabular}{lc}
\toprule
\textbf{Parameter} & \textbf{Value} \\
\midrule
\multicolumn{2}{c}{\textbf{M-UAV mission parameters}} \\
\midrule
Initial position ($\q_i$) & $[-95, 95, 50]^\top~\mathrm{m}$ \\
Final region ($\q_f, r_f$) & $([ 95, -95, 50]^\top, 30)~\mathrm{m}$\\
Operational altitude ($z_{\min}, z_{\max}$) & $(20, 100)~\mathrm{m}$ \\
Maximum velocity ($v^{\max}_x, v^{\max}_y, v^{\max}_z$) & $(20, 20, 10)~\mathrm{m/s}$ \\
Timeslot duration ($\delta_t$) & $1$ s \\
Minimum safety distance ($d^{\mathrm{min}}$) & $8~\mathrm{m}$ \\
Maximum mission time ($T$) & $100$~$\mathrm{s}$ \\
Timeslot duration ($\delta_t$) & $1$~$\mathrm{s}$\\
\midrule
\multicolumn{2}{c}{\textbf{Flight power consumption parameters} \cite{TatarMamaghani2022}}\\
\midrule
Weight force ($G_0$) & $29.4~\mathrm{N}$ \\
Blade profile power in hovering state ($P_b$) & $79.86~\mathrm{W}$ \\
Induced power in hovering state ($P_i$) & $88.63~\mathrm{W}$ \\
Profile drag coefficient ($C_0$) & $0.0092~\mathrm{W.(m/s)^{-3}}$ \\
Tip speed of the propeller ($U_{\text{tip}}$) & $120~\mathrm{m/s}$ \\
Mean rotor induced velocity in hovering state ($\nu_0$) & $4.03~\mathrm{m/s}$ \\
\midrule
\multicolumn{2}{c}{\textbf{Communication and sensing parameters}  \cite{khuwaja2018survey}} \\
\midrule
Uplink users ($L$) & 5 \\
Downlink users ($K$) & 10 \\
Transmit antennas ($M^{Tx}_h, M^{Tx}_v$) & $(5, 5)$ \\
Receive antennas ($M^{Rx}_h, M^{Rx}_v$) & $(5, 5)$ \\
Operating frequency ($f_c$) & $10~\mathrm{GHz}$ \\
Path-loss exponent ($\alpha$) & $3$ \\
Excessive attenuation factor ($\varsigma$) & $0.1$ \\
Normalized reference channel gain ($\beta_0$) & $110~\mathrm{dB}$ \\
Rician K-factor ($K_{\text{factor}}$) & $10$ \\
Full bandwidth ($B$) & $30~\mathrm{MHz}$ \\
RMS bandwidth ($B_{\text{rms}}$) & $\sqrt{\pi^2/3}B$ \\
Probabilistic channel coefficients ($E_1, E_2$) & ($9.61, 0.16$)\\
\midrule
\multicolumn{2}{c}{\textbf{Thresholds and scales}} \\
\midrule
UCSR threshold ($\rho^{\ul}$) & $0.1$ \\
DCSR threshold ($\rho^{\dl}$) & $0.5$ \\
CRLB estimation threshold ($\rho^{\est}$) & $0.001$ \\
Convergence criterion for Algorithm \ref{algo1} ($\varepsilon$) & 0.001\\
Reward scaling factors ($\eta_1, \eta_2$) & $(10, 2)$\\
Goal reward ($\mathtt{r}_\mathrm{goal}$)  & $100$\\
Failure penalty ($\mathtt{p}_\mathrm{fail}$) & $-10$\\
\midrule
\multicolumn{2}{c}{\textbf{MDP environment parameters}} \\
\midrule
Lower limit of grid & $[-100,-100, 20]^\top$ m \\
Upper limit of grid & $[100,100, 100]^\top$ m  \\
Speed level set \( \mathtt{vel=\{v_{low}, v_{med}, v_{high}\} }\) & \{$0, v^{max}/2, v^{max}$\}\\
\midrule
\multicolumn{2}{c}{\textbf{DDQN agent options}} \\
\midrule
Size of hidden layers ($n_{h1}, n_{h2}$) & (16, 8)\\
Target smooth factor & $0.01$ \\
Discount factor ($\gamma$) & $0.99$ \\
Mini-batch size ($L_\mathrm{bs}$) & $64$ \\
Initial exploration rate & $1$ \\
Epsilon decay rate & $0.001$ \\
Minimum exploration rate & $0.01$ \\
Target update frequency ($f^{\mathtt{update}}$) & $5$ \\
Number of training episodes ($\mathrm{epMax}$) & 1000\\
Critic optimizer learning rate & $0.001$ \\
Critic optimizer gradient threshold & $1$\\
\bottomrule
\end{tabular}}
\end{table}

{\subsection{Convergence and Complexity Analysis}
\label{sec:conv-complex}

\subsubsection{Convergence of Algorithms}

At each timeslot \(n\), Algorithm~\ref{algo1} solves the convexified follower subproblem \(\Prob{1.4}\) using a successive convex approximation (SCA) strategy. Let  
\(\mathcal{Z} \triangleq \{ \mathbf{V}, \mathbf{W}, \mathbf{p}, \pmb{\mu}, \pmb{\omega}, \mathbf{t}, \mathbf{s}, \pmb{\iota} \}\)  
denote the aggregate parameter vector, and let \(U_1^{(i)}\) represent the objective value at the \(i^{\text{th}}\) SCA iteration. During each iteration, non-convex constraints in the original problem \(\Prob{1.1}\) are approximated using first-order Taylor expansions, evaluated at the local point \(\mathcal{Z}^{(i)}\). These surrogates are tight at the expansion point, ensuring that \(\mathcal{Z}^{(i)}\) remains feasible in the next iteration. This guarantees that the feasible set of the convexified problem at iteration \(i{+}1\) is non-empty, preserving the iterations. Since the SCA approximations provide a lower bound for convex functions (or an upper bound for concave functions), the feasible set of the approximated problem is a subset of the original one. In addition, each convex surrogate is solved optimally and all transmit powers are strictly positive to satisfy constraints such as \eqref{ul_cst}, \eqref{dl_cst}, and \eqref{er_cst}, the sequence of objective values \(\{ U_1^{(i)} \}\) is monotonically non-increasing, i.e., $U_1^{(i)} \leq U_1^{(i-1)}$, and bounded below by zero. Hence, the sequence converges. As a result, the SCA algorithm converges to a KKT point of the original non-convex problem  $\Prob{1.3}$, ensuring a locally optimal solution for Player 1's resource allocation. 

On the other hand, since the leader (M-UAV) updates its policy only after receiving the corresponding follower response, the interaction between the two agents adheres to a \textit{two-time-scale} learning framework. In this setup, the follower optimization is executed to local convergence via SCA (i.e., Algorithm~\ref{algo1}) at each timeslot, while the DDQN-based leader explores and learns on a slower time scale across episodes. Under standard conditions for function approximation and experience replay, and given sufficient exploration, Algorithm~\ref{algo2} converges almost surely to a Stackelberg equilibrium \cite{han2019game}, as experimentally shown in the numerical results.

\subsubsection{Complexity of Follower's Algorithm~\ref{algo1}}

In every SCA round the follower solves an SDP comprising $N_{\mathrm{var}}=(K+1)M_t^{2}+5L+2K+2$ real decision variables: the first term collects the $(K+1)$ complex Hermitian beamforming matrices ($M_t^{2}$ real degrees of freedom each), while the latter term corresponds to the scalar slack variables.
The constraint set contains $6L+3K+3$ second–order/affine inequality constraints plus $N_{\mathrm{LMI}}=K+1$ linear matrix inequality (LMI) constraints. For an interior–point solver \cite{Boyd2006}, which is typically used to solve convex conic problems, each $M_t\times M_t$ LMI contributes $M_t$ to the size of logarithmic–barrier parameter vector, so the total barrier size is $N_{\mathrm{LMI}} + (K+1)M_t$. Let $\varepsilon$ be the convergence accuracy of SCA in Algorithm \ref{algo1} for the $N$ subproblems. Then, the overall follower-side complexity is given by
\begin{align}
\mathrm{C}_{\text{SCA}} =\mathcal{O}&\Bigl(N\,\bigl[(K+1)M_t^{2}+5L+2K+2\bigr]^{2}\nonumber\\
&\times\Bigl[(K+1)M_t+6L+3K+3\Bigr]^{\frac{3}{2}}
\log\frac{1}{\varepsilon}\Bigr).
\end{align}

\subsubsection{Complexity of Leader's Algorithm~\ref{algo2}}

The DDQN algorithm optimizes the M-UAV’s trajectory by training two neural networks over multiple episodes. The complexity is broken down into per-step and overall training costs. Let $n_i$, $n_{h1}$, $n_{h2}$, and $n_o$ denote the number of neurons in the input, first hidden, second hidden, and output layers of the deep Q-network, respectively. The total number of trainable parameters is given by $|\pmb{\vartheta}| = n_i n_{h1} + n_{h1} n_{h2} + n_{h2} n_o$, where $n_i=|\mathcal{S}|$ and $n_o=|\mathcal{A}|$. Let $L_\mathrm{bs}$ represent the mini-batch size, $\mathrm{epMax}$ the total number of training episodes, and $N$ the maximum number of time steps per episode. Each forward pass during action selection and each backpropagation step for stochastic gradient descent (SGD) update incur a computational cost of $\mathcal{O}(|\pmb{\vartheta}|)$ \cite{Sutton2018}. Consequently, the total training complexity over all episodes is

\begin{equation}
\label{eq:DDQN-complexity}
\mathrm{C}_{\text{DDQN}} = \mathcal{O}(\mathrm{epMax} \cdot N \cdot L_\mathrm{bs} \cdot |\pmb{\vartheta}|).
\end{equation}

In terms of memory, the dominant component is the experience replay buffer. Each stored transition includes a state-action pair and the corresponding reward and next-state tuple. Thus, for a buffer of size $|\mathcal{D}|$, the overall memory complexity is $\mathcal{O}(|\mathcal{D}| (n_i + n_o + 2))$. 

\begin{remark} For energy-constrained UAVs, it is preferable to offload the training process of Algorithm~\ref{algo2} to a ground control server or edge node, especially during the offline policy learning phases. During flight, only the forward pass of the trained DDQN is executed for trajectory decisions, which incurs negligible computational overhead and energy cost compared to backpropagation or full-scale training. This separation aligns well with real-time deployment scenarios where onboard computation and battery capacity are limited.
\end{remark}
}

\begin{remark}
     While one may consider a fully model-free DRL framework to solve the Stackelberg game by jointly learning the trajectories and resource allocations, this approach suffers from severe practical limitations. The high-dimensional mixed-integer state-action space renders convergence slow and unstable due to the curse of dimensionality, especially under strict SINR, leakage, and secrecy constraints. In contrast, the proposed SCA-DRL design efficiently exploits the problem structure: SCA efficiently solves the follower's resource allocation, while DRL handles the dynamic nature of the leader’s energy-efficient trajectory optimization. This decomposition not only improves convergence and complexity but also aligns with the sequential nature of Stackelberg dynamics.
\end{remark}

\section{Numerical Results} \label{sec4_simulations}
In this section, we present simulation results to evaluate the performance of the proposed approach for secure ISAC operation with an M-UAV under various configurations. We adopt the simulation parameters listed in Table \ref{table1}.

\begin{figure}[t]
 \centering
\hspace*{-7mm}\subfloat[\label{fig:sub1}]{ \includegraphics[width=0.6\linewidth]{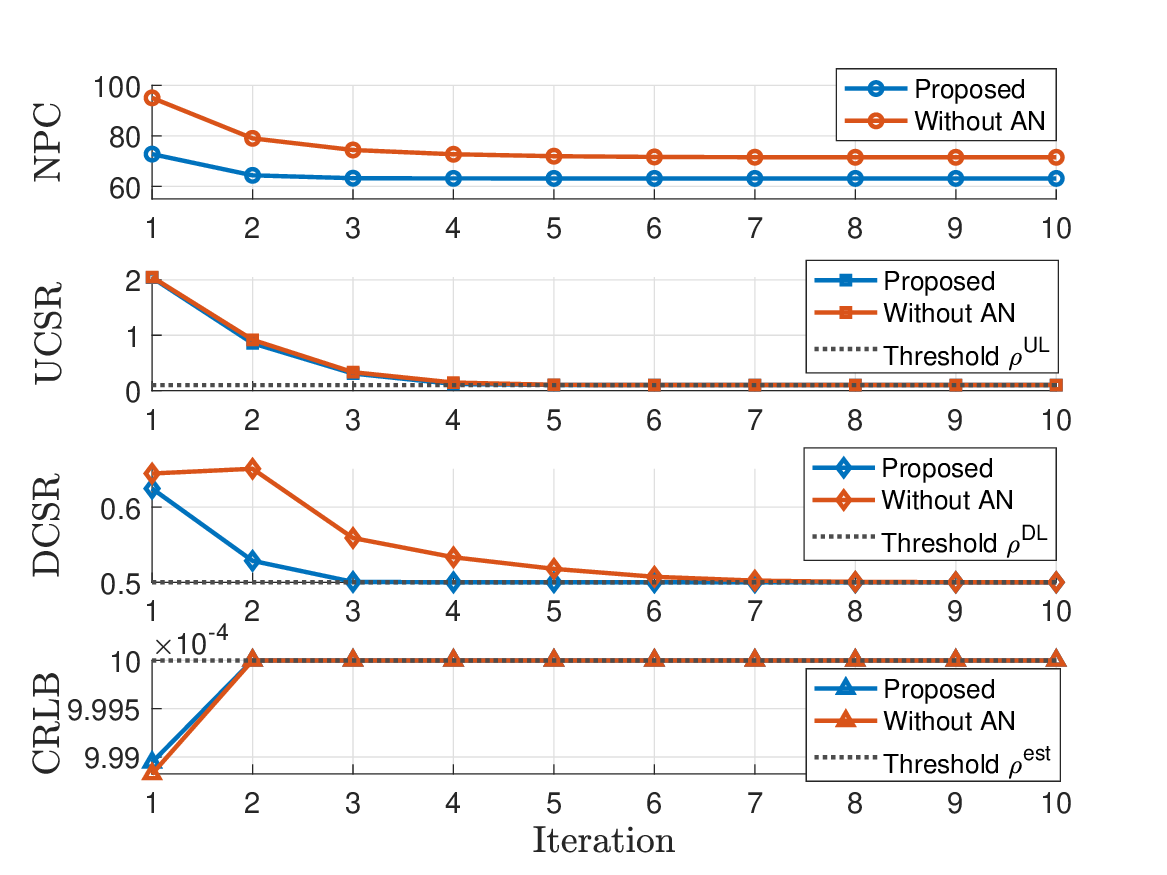}}
~\hspace*{-7mm}
\subfloat[\label{fig:sub2}]{ \includegraphics[width=0.6\linewidth]{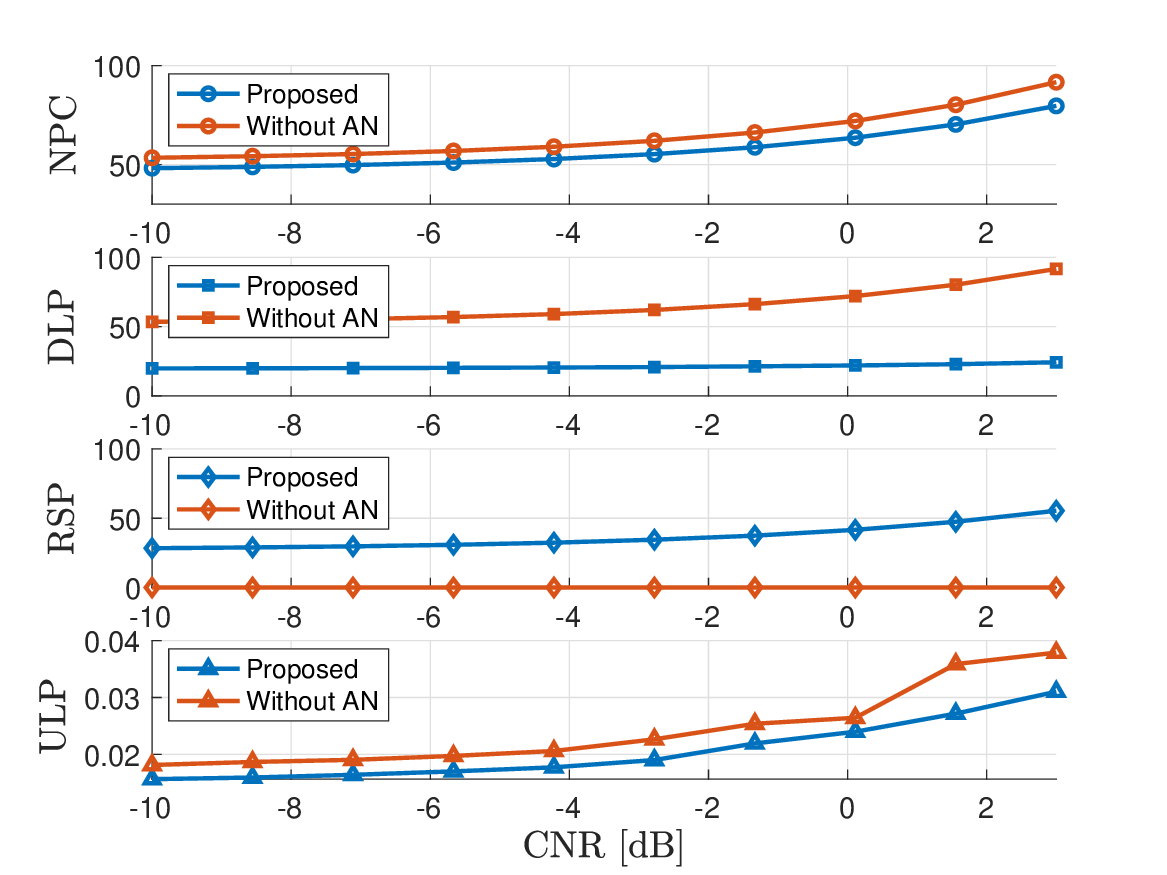}}
  \caption{(a) Algorithm~\ref{algo1}'s convergence versus iteration index, (b) impact of environmental clutter on power allocation for arbitrary channel realizations when $\q_e = [25, 45, 50]^\top$.}
    \label{fig:combined}
\end{figure}

\figref{fig:combined} illustrates the convergence behavior and power efficiency of the proposed low-complexity secure ISAC design (i.e., Algorithm~\ref{algo1}) for Player 1's resource allocation strategy. We also compare our proposed design with a benchmark, labeled as \emph{Without AN}, where the R-BS enforces $\mathbf{W} = \mathbf 0$ in Algorithm~\ref{algo1}, i.e., no dedicated sensing/jamming beamforming. The curves in \figref{fig:sub1} show a rapid decline in the network power consumption (NPC) within the first few iterations, indicating that the algorithms efficiently find solutions with minimal computational overhead, which is critical for dynamic environments. The results also show that our AN-assisted scheme achieves approximately $13\%$ lower NPC compared to the benchmark,  highlighting the dual role of AN in improving power efficiency performance and degrading the adversary's reception quality. In addition, the plots indicate that the algorithm maintains feasible solutions, i.e., guaranteeing compliance with the UCSR, DCSR, and CRLB requirements, which are imposed by the constraints~\eqref{ul_cst}, \eqref{dl_cst}, and \eqref{er_cst},  while gradually refining power allocation and beamforming strategies to minimize the NPC, confirming the accuracy of our analysis. 

In \figref{fig:sub2}, we examine the impact of uncorrelated clutter on the performance of the proposed secure ISAC system. To quantify this effect, the clutter-to-noise ratio (CNR) is defined as the ratio of the clutter power at the R-BS to the antenna noise power at the receiver, assuming that the antenna noise level is identical across all receivers. A CNR below the noise floor indicates a weak clutter environment, where standard signal processing techniques such as Doppler filtering or spatial nulling can effectively mitigate clutter-induced interference. The results demonstrate that as the CNR increases, the proposed design adaptively reallocates more radar sensing power (RSP) to satisfy the CRLB requirement, while the downlink transmission power (DLP) remains relatively unchanged.  On the other hand, the Without-AN scheme requires larger uplink transmission power (ULP) and DLP to preserve the desired communication secrecy and sensing accuracy, leading to substantially greater NPC. This performance gap becomes increasingly pronounced in clutter-rich environments, underscoring the robustness and power efficiency of the proposed AN-assisted scheme in such scenarios.


\begin{figure}[t]
    \centering
\hspace*{-3mm}\subfloat[\label{sim5:sub1}]{ \includegraphics[width=0.5\linewidth]{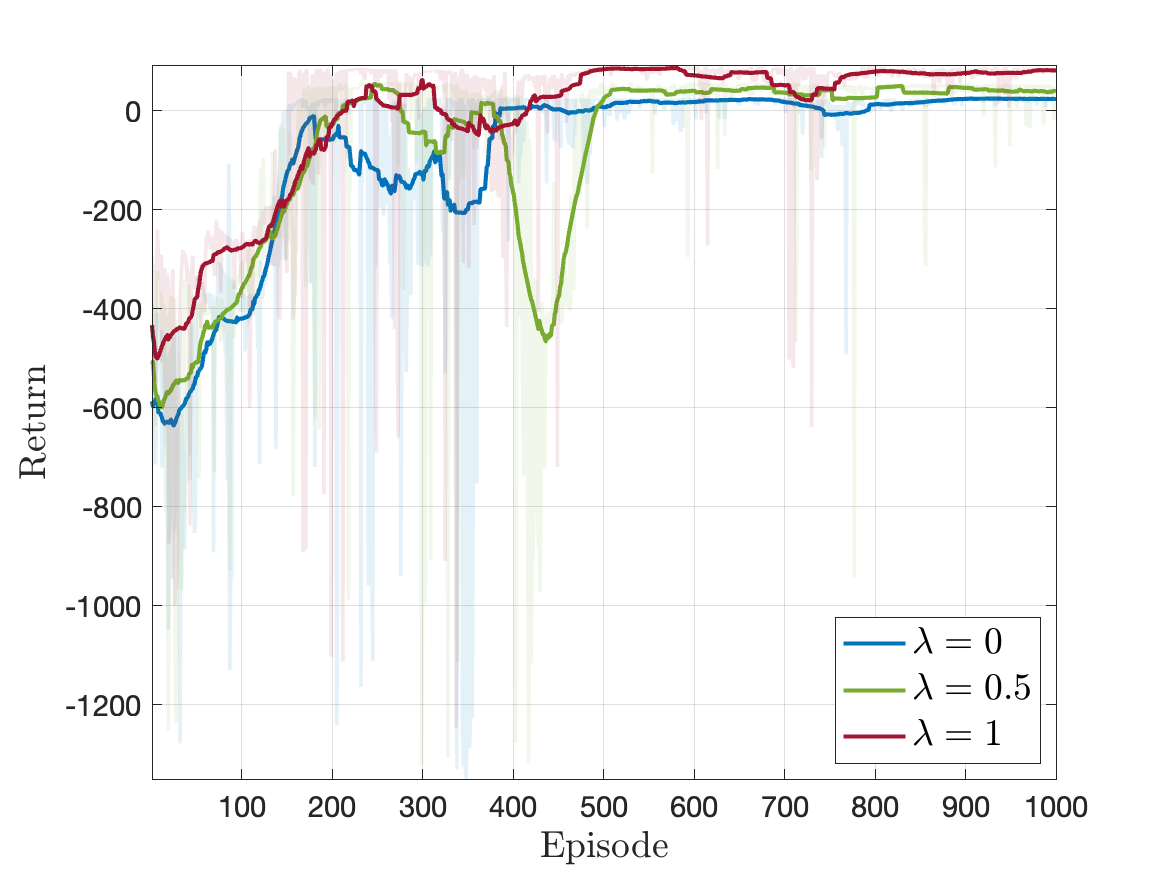}}
~\hspace*{-6mm}
\subfloat[\label{sim5:sub2}]{ \includegraphics[width=0.5\linewidth]{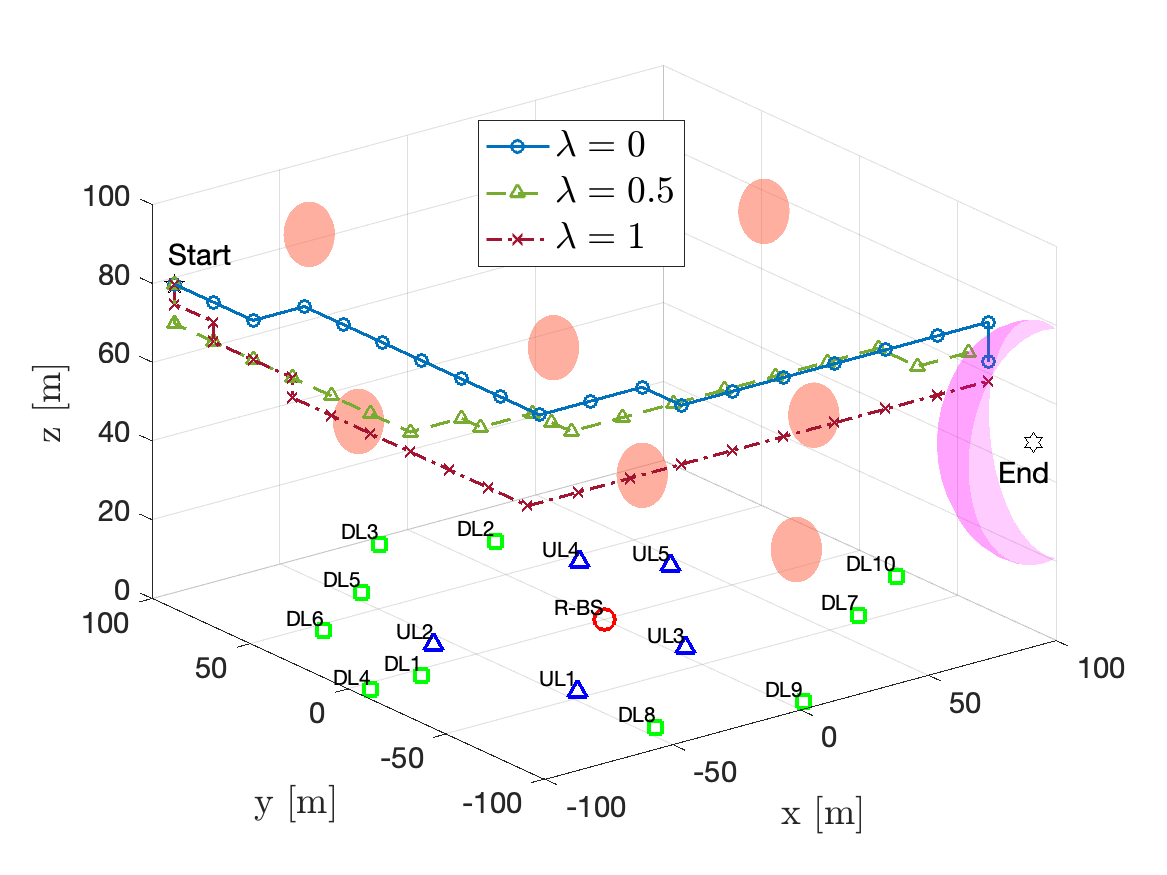}}

\subfloat[\label{sim5:sub3}]{ \includegraphics[width=0.9\linewidth]{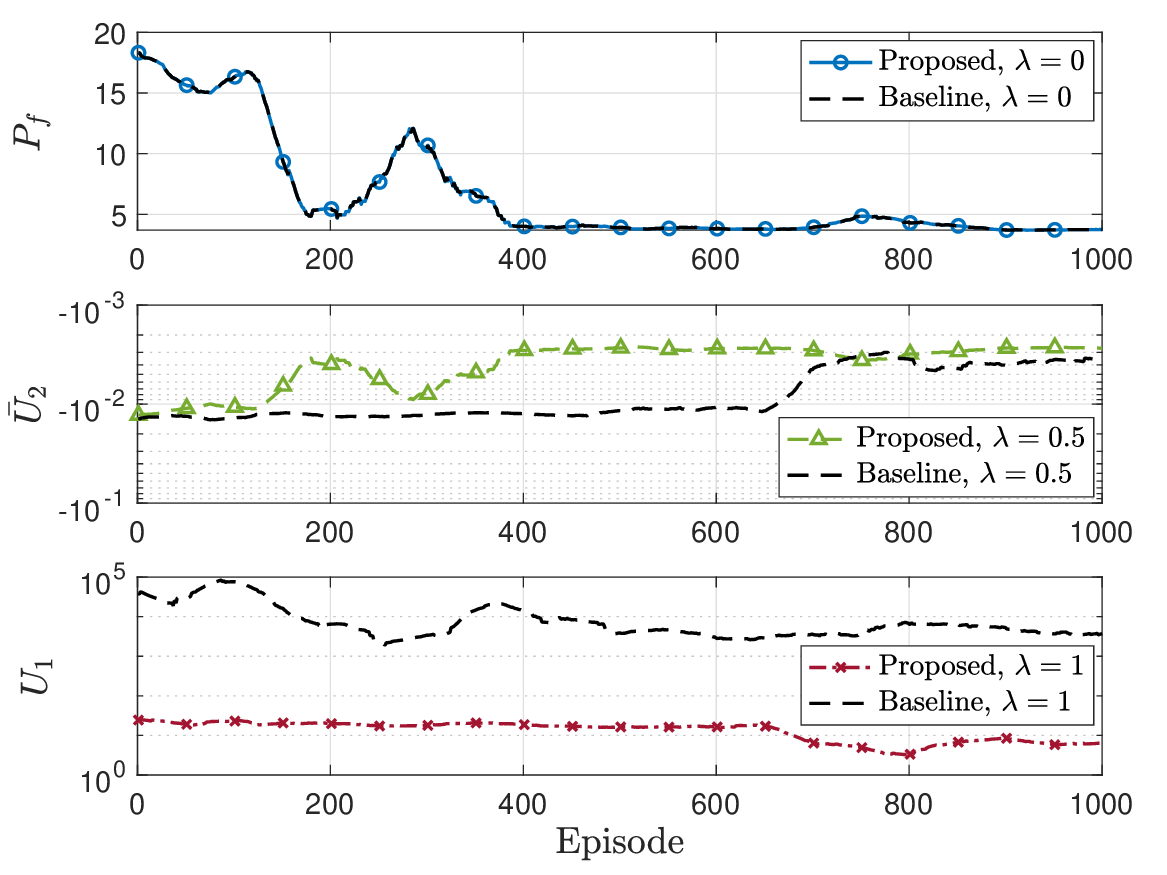}}
    \caption{(a) M-UAV's return across episodes, (b) designed trajectory profiles, (c) Player 2's utility via Algorithm \ref{algo2} for different $\lambda$.}
    \label{sim5:combined}
\end{figure}

\figref{sim5:combined} shows the training performance, M-UAV trajectories, and interactions among players in the  NSG framework designed using Algorithm \ref{algo2}, for different scaling factors $\lambda$. \figref{sim5:sub1} highlights the convergence of the discounted cumulative reward (return), reflecting the M-UAV's ability to gradually learn and optimize its trajectory by balancing its own power efficiency and the legitimate network's power allocation. The convergence trend shows that for all $\lambda$ values, the return stabilizes after approximately $500$ episodes. For $\lambda =0$,  the M-UAV prioritizes its own power reduction over that of Player 1, while $\lambda=1$
achieves the highest return reflecting the M-UAV's focus on maximizing Player 1's power consumption.

In \figref{sim5:sub2}, the optimized M-UAV trajectories are visualized for different values of $\lambda$, representing the trade-off between the M-UAV’s power efficiency and Player 1's power maximization. In all scenarios, the M-UAV successfully avoids collisions with obstacles (red balls) and finds an efficient path to the destination. For $\lambda=0$, the M-UAV prioritizes its power efficiency, resulting in a conservative trajectory with $N=20$ steps and minimal altitude changes to reduce energy consumption. Conversely, $\lambda=1$ causes the M-UAV to adjust its trajectory to maximize Player 1's power consumption, requiring $N=22$ steps as it moves farther from the R-BS, thereby increasing demand on transmit power to satisfy the sensing and secrecy requirements. For intermediate values $\lambda=0.5$, the M-UAV adopts a balanced path-planning strategy. 

\figref{sim5:sub3} illustrates Player 2’s utility over the episodes, demonstrating the interactions between the network and the M-UAV via the proposed NSG solution. The curves labeled \textit{Baseline} correspond to a scenario where Player 1 does not implement any measures to thwart the M-UAV, using only the initial transmit powers and beamforming matrices from Algorithm \ref{algo1}. In contrast, the \textit{Proposed} curves represent the network design based on our approach. For $\lambda=0$, the M-UAV minimizes its power consumption, completely disregarding the countermeasures of the legitimate network. As $\lambda$ increases, the M-UAV adapts its strategy to increase the power consumption of the legitimate network at the cost of increasing its own power usage. The proposed algorithm clearly outperforms the baseline scheme in terms of $\bar{U}_2$. Player 2's regularized utility function normalizes the power consumption of the network and M-UAV by their respective maximum values for fairness. When $\lambda=1$, the NSG reduces to a traditional zero-sum game, where Player 1 minimizes $U_1$ while Player 2 proactively increases it, resulting in an equilibrium. The above results validate the effectiveness of the proposed DRL approach in the Stackelberg game framework. 

\section{Conclusion}\label{sec5_conclusions}
This paper has presented a game-theoretic framework for securing an FD-ISAC system against a proactive aerial adversary. By modeling the interaction between the legitimate network and the M-UAV as an NSG, we jointly optimized the legitimate network's power allocation as well as the M-UAV's power usage and trajectory planning over time, ensuring an energy-efficient adaptive strategy for both players. The proposed solution combines SCA for network resource management with DRL to dynamically adjust the M-UAV’s trajectory. Our simulation results show that the solution efficiently achieves a Stackelberg equilibrium, balancing network power efficiency, communication secrecy, and sensing accuracy.  
Simulation results demonstrate that our proposed solution effectively mitigates the interception capabilities of the M-UAV, maintaining robust DL and UL secrecy rates while ensuring required sensing performance. The convergence behavior of our algorithms confirmed the efficiency of the solution in dynamic environments. 
Our study not only demonstrates the potential of using game theory empowered by convex optimization and DRL in secure ISAC systems, but also paves the way for addressing dynamic threats in evolving 6G networks via novel hybrid optimization techniques. \textcolor{black}{Future work will extend our SCA‐DRL framework to a more realistic continuous state–action framework under stochastic and partially observable environment dynamics, while considering more adaptive reward shaping via generative learning models and adversarial strategies to generalize the approach to more diverse ISAC scenarios.}

\appendices
\numberwithin{equation}{section}
\makeatletter 
\newcommand{\section@cntformat}{Appendix \thesection:\ }
\makeatother

\section{Proof of Theorem \ref{theorem1}}\label{Appendix_A}

Rewrite the signal vector \eqref{yA_tilde} as 
\begin{align}\label{y_observed}
    \y(t) = \H(t) \x(t-\tau) + \n(t),
\end{align}
where $\H(t) = \zeta_0(t)\A_0(t)$. Recall that the CRLB gives the minimum variance attainable by any unbiased estimate \( \hat{\tau} \), and is defined by \cite{kay1993fundamentals}
\begin{align}\label{crlb_theorem}
\Var(\hat{\tau}) \geq \frac{1}{\mathbf{J}(\tau)}=\crlb(\tau),
\end{align}
where $\mathbf{J}(\tau)$ is the Fisher Information matrix (FIM) 
\begin{align}\label{FI_definition}
\hspace{-3mm}\mathbf{J}(\tau)  = \E \left( \left[ \frac{\partial \ln \Lambda(\y; \tau)}{\partial \tau} \right]^2 \right) = -\E \left( \frac{\partial^2 \ln \Lambda(\y; \tau)}{\partial \tau^2}\right),
\end{align}
and $\Lambda(\y; \tau)$ is the corresponding likelihood function. We derive the CRLB for delay estimation by formulating the likelihood function for the observed signal \eqref{y_observed}. Since the noise is Gaussian, the likelihood function is expressed as \cite{bliss2013adaptive}
\begin{align}
    \Lambda(\y; \tau) = \frac{\exp\left( - (\y- \H \x)^\dagger \boldsymbol{\Sigma}_n^{-1} (\y- \H \x) \right)}{\pi^{N_r} \det(\boldsymbol{\Sigma}_n)}.
    \end{align}    
We calculate the FIM by taking the second derivative of the log-likelihood function w.r.t. $\tau$. Defining the residual vector as \( \d \triangleq \y(t) -\H\x(t - \tau) \), the log-likelihood function becomes
\begin{align}
    \ln \Lambda(\y; \tau) = - N_r \ln(\pi) - \ln(\det(\boldsymbol{\Sigma}_n)) - \d^\dagger \boldsymbol{\Sigma}_n^{-1} \d.
\end{align}
Taking the derivative with respect to  $\tau$, we obtain
\begin{align}\label{log_likehood_ode}
 \hspace{-3mm}-\frac{\partial \ln \Lambda(\y; \tau)}{\partial \tau} =  \frac{\partial \d^\dagger \boldsymbol{\Sigma}_n^{-1} \d}{\partial \tau} = 2 \Re \left\{ \left( \frac{\partial \d}{\partial \tau} \right)^\dagger \boldsymbol{\Sigma}_n^{-1} \d \right\}.
\end{align}
In addition, denoting $\dot{\x}(t-\tau)\triangleq \frac{\partial \x(t-\tau)}{\partial \tau}$, we have
\begin{align}
\frac{\partial \d}{\partial \tau} = \H \dot{\x}(t - \tau).
\end{align}
Accordingly, we can simplify \eqref{log_likehood_ode} as
\begin{align}
\frac{\partial \ln \Lambda(\y; \tau)}{\partial \tau} = -2 \Re \left\{ \left[\H \dot{\x}(t - \tau)\right]^\dagger \boldsymbol{\Sigma}_n^{-1} \d \right\}.
\end{align}
The second derivative of the log-likelihood is
\begin{align}\label{sec_der_loglike}
\hspace{-4mm}\frac{\partial^2 \ln \Lambda(\y; \tau)}{\partial \tau^2} \approx -2 \Re \left\{ \left[\H \dot{\x}(t - \tau)\right]^\dagger \boldsymbol{\Sigma}_n^{-1} \H \dot{\x}(t - \tau) \right\}.
\end{align}
The above approximation follows by ignoring the second derivative of $\x$ since it is negligible for typical signal models (e.g., constant-envelope or band-limited signals). Therefore, substituting \eqref{sec_der_loglike} into \eqref{FI_definition} and using the properties of the trace function, the FIM is calculated as
\begin{align}\label{FIM_calc}
\mathbf{J}(\tau) &=  -\E \left( \frac{\partial^2 \ln \Lambda(\y; \tau)}{\partial \tau^2}\right) \nonumber\\
&=  2\tr \left( \H\E\left(\dot{\x}(t - \tau)^\dagger  \dot{\x}(t - \tau)\right) \H^\dagger  \boldsymbol{\Sigma}_n^{-1} \right).
\end{align}

Note that $\E \left( \dot{\x}^\dagger(t - \tau) \dot{\x}(t - \tau) \right)
$ represents the energy of the time derivative of the signal $\x$, which is related to the signal bandwidth. In general, the energy of a bandlimited signal can be approximated using Parseval's theorem as \cite{bliss2013adaptive} 
\begin{align}\label{signaldot_approx}
    \E \left( \dot{\x}^\dagger(t - \tau) \dot{\x}(t - \tau) \right) \approx 4 \pi^2 B_{rms}^2 \S.
\end{align} 
Substituting \eqref{signaldot_approx} into \eqref{FIM_calc}  and using  \eqref{crlb_theorem} results in the CRLB approximation for the estimate of the delay $\tau$ given in Theorem \ref{theorem1}, which completes the proof.
\hfill$\blacksquare$

\section{Proof of Rank-One Optimality}\label{app:rank1}
We prove that, at every iteration of the SCA algorithm, the SDR adopted in solving Problem~$\Prob{1.4}$ through Algorithm~\ref{algo1} yields rank-one optimal solutions for the downlink covariance matrices $\{\V_k^\star\}_{k\in\K}$, i.e., $\rank(\V_k^\star)=1$ for all $k\in\K$. Thus, each beamforming vector $\mathbf v_k$ can be recovered directly from the dominant eigenpair of $\V_k^\star$, without any need for Gaussian randomization or rank-reduction methods, using the singular value decomposition (SVD) as follows:
\[
\v_k = \sqrt{\lambda_k} \mathbf{u}_k, \quad \text{where } \V_k^\star = \lambda_k \mathbf{u}_k \mathbf{u}_k^\dagger.
\]

To this end, since $\Prob{1.4}$ is convex with zero duality gap, Slater’s condition holds, implying that the Karush–Kuhn–Tucker (KKT) conditions are therefore necessary and sufficient for optimality \cite{Boyd2006}.
Let $\Z_k\succeq\mathbf0$ denote the dual matrix associated with the semidefinite constraint $\V_k\succeq\mathbf0$. Focusing only on the subset of constraints in $\Prob{1.4}$ that involve $\V_k$, we express each such constraint in the general form \( g_i(\V_k) \le 0 \). The partial Lagrangian, ignoring terms independent of $\V_k$, can then be represented as
\begin{align}
\L &\!=\!  \tr(\V_k)\!-\!  \lambda_k \tr(\V_k \Ha{k})  \!-\!\beta_k \tr(\boldsymbol{Y}\V_k\boldsymbol{Y}^\dagger) \!+\! \tr(\Z_k \V_k)\nonumber\\  
&\hspace{-5mm}\!+\!(\eta_k\!-\!\bar{\eta}_k) \tr(\V_k \Ha{e})\!+\! \sum_{\ell=1}^L \rho_\ell\, |\zeta_0|^2 \, \tr( \A_0^\dagger \B_\ell \A_0 \V_k),
\end{align}
where $\lambda_k, \eta_k, \bar{\eta}_k, \rho_\ell  \ge 0$ are the nonnegative dual multipliers, and
$\B_\ell \triangleq \left[\Psi_\ell^{\text{lo}}\right]^{-1} \h_{\ell,a} \h_{\ell,a}^\dagger \left[\Psi_\ell^{\text{lo}}\right]^{-1}$. Thus, the dual problem becomes an unconstrained optimization problem in the form $\max_{\V_k} \mathcal{L}$. Setting the derivative of $\L$ w.r.t. $\V_k$ to zero yields
\begin{align}
\frac{\partial \L}{\partial \V_k} 
&= \I 
   - \lambda_k \Ha{k} 
   - \beta_k \boldsymbol{Y}^\dagger \boldsymbol{Y}
   + (\eta_k - \bar{\eta}_k) \Ha{e} \nonumber\\
   &\quad + \sum_{\ell=1}^L \rho_\ell\, |\zeta_0|^2\, \A_0^\dagger \B_\ell \A_0 
   + \Z_k = \mathbf{0},
\end{align}
which implies 
$\Z_k = \lambda_k \Ha{k} + \beta_k \boldsymbol{Y}^\dagger \boldsymbol{Y}  + \bar{\eta}_k \Ha{e}- \M_k$, where $\M_k \triangleq \I + \eta_k \Ha{e} + \sum_{\ell=1}^L \rho_\ell\, |\zeta_0|^2\, \A_0^\dagger \B_\ell \A_0$.

Note that the first term in $\M_k$ is strictly positive definite, i.e., $\I \succ \mathbf{0}$. The second term $\eta_k \Ha{e} \succeq \mathbf{0}$ is PSD since it represents a nonnegative scaling of the rank-one Hermitian matrix $\Ha{e} = \h_{a,e} \h_{a,e}^\dagger$. Finally, each term $\A_0^\dagger \B_\ell \A_0$ in the summation is also PSD since $\B_\ell \succeq \mathbf{0}$ and congruence with the full-rank matrix $\A_0$ preserves positive semidefiniteness. Thus, all terms in $\M_k$ are Hermitian and PSD, and the presence of $\I \succ \mathbf{0}$ ensures that $\M_k \succ \mathbf{0}$. Likewise, we have $\pmb{\Phi}\triangleq \lambda_k \Ha{k} + \beta_k \boldsymbol{Y}^\dagger \boldsymbol{Y}  + \bar{\eta}_k \Ha{e} \succ 0$.

By complementary slackness, the optimal solution satisfies $\Z_k \V_k^\star = \mathbf{0}$. Substituting in the expression for $\Z_k$, we obtain
\begin{align}\label{eq:eig_relation}
\pmb{\Phi} \V_k^\star = \M_k \V_k^\star.
\end{align}
Since $\M_k \succ \mathbf{0}$, it admits a unique Hermitian square root $\M_k^{1/2} \succ \mathbf{0}$ \cite{horn2012matrix}. We apply the congruence transformation
\begin{align}\label{transform_congru}
\widetilde{\V}_k \triangleq \M_k^{1/2} \V_k^\star \M_k^{1/2},
\end{align}
and substitute $\V_k^\star = \M_k^{-1/2} \widetilde{\V}_k \M_k^{-1/2}$ into \eqref{eq:eig_relation}. After simplification, we have the eigenvalue relation
\begin{align}\label{eq:rank1_relation}
\widetilde{\mathbf{H}}_k \widetilde{\V}_k = \widetilde{\V}_k,
\end{align}
where
\[
\widetilde{\mathbf{H}}_k \triangleq \lambda_k \widetilde{\mathbf{h}}_k \widetilde{\mathbf{h}}_k^\dagger + \beta_k \widetilde{\boldsymbol{y}} \widetilde{\boldsymbol{y}}^\dagger + \bar{\eta}_k \widetilde{\mathbf{h}}_e \widetilde{\mathbf{h}}_e^\dagger,
\]
and 
$
\widetilde{\mathbf{h}}_k \triangleq \M_k^{-1/2} \mathbf{h}_{a,k},~\widetilde{\boldsymbol{y}} \triangleq \M_k^{-1/2} \boldsymbol{Y}^\dagger,~\widetilde{\mathbf{h}}_e \triangleq \M_k^{-1/2} \mathbf{h}_{a,e}.
$
Each term in $\widetilde{\mathbf{H}}_k$ is a Hermitian rank-one PSD matrix. Consequently, $\widetilde{\mathbf{H}}_k$ is Hermitian and has rank at most three. Since $\widetilde{\mathbf{H}}_k \widetilde{\V}_k = \widetilde{\V}_k$, the range of $\widetilde{\V}_k$ lies in the span of $\{\widetilde{\mathbf{h}}_k, \widetilde{\boldsymbol{y}}, \widetilde{\mathbf{h}}_e\}$. Thus, $\widetilde{\V}_k$ lies in a cone spanned by at most three rank-one PSD matrices. Since the objective function is linear in $\V_k$ (and hence in $\widetilde{\V}_k$), the optimum is attained at an extreme point of the feasible set. The extreme points of the PSD cone under linear constraints are rank-one matrices \cite{Huang2010Rank1, le2023rank}. Hence, $\widetilde{\V}_k$ is rank-one. 

The above result can be obtained using another perspective. Suppose, for contradiction, that the optimal solution $\widetilde{\V}_k$ has rank $r > 1$. Let its eigen-decomposition be expressed as
\begin{align}
\widetilde{\V}_k = \sum_{j=1}^r \sigma_j \mathbf{u}_j \mathbf{u}_j^\dagger, \quad \sigma_j > 0, \; \mathbf{u}_i^\dagger \mathbf{u}_j = \delta_{ij}.
\end{align}
where $\delta_{ij}=\delta(i-j)$ is the Dirac delta function. Since every matrix that appears in the span of $\widetilde{\V}_k$, i.e., 
$\widetilde{\mathbf{h}}_k  \widetilde{\mathbf{h}}_k^\dagger$,
$\widetilde{\boldsymbol{y}}\widetilde{\boldsymbol{y}}^\dagger$, and
$\widetilde{\mathbf{h}}_e \widetilde{\mathbf{h}}_e^\dagger$, is itself rank one, each such constraint depends on $\widetilde{\V}_k$ only through
the \emph{scalar} inner products
$\mathbf{u}_j^\dagger \widetilde{\mathbf{q}}\,
 \widetilde{\mathbf{q}}^\dagger \mathbf{u}_j
 = |\widetilde{\mathbf{q}}^\dagger\mathbf{u}_j|^2$
with $\widetilde{\mathbf{q}}\in\bigl\{\widetilde{\mathbf{h}}_k,
\widetilde{\boldsymbol{y}},\widetilde{\mathbf{h}}_e\bigr\}$.
Consequently, if we allocate the \emph{entire} power
$ \sum_{j=1}^r \sigma_j$ to whichever eigen-vector
$\mathbf{u}_j$ yields the largest value of
$|\widetilde{\mathbf{h}}_k^\dagger\mathbf{u}_j|^2$
(the eigenvector that couples most strongly to the intended user‐channel), all linear constraints involving $\widetilde{\V}_k$ remain satisfied or can be satisfied with at most the same total power. 

In light of this, define the index of the best-aligned component with the intended channel as
$j^\star \!=\! \operatorname{argmax}_{1 \le j \le r} |\widetilde{\mathbf{h}}_k^\dagger \mathbf{u}_j|^2$, and construct a rank-one matrix
\begin{align}
\widehat{\V}_k = \left( \sum_{j=1}^r \sigma_j \right) \mathbf{u}_{j^\star} \mathbf{u}_{j^\star}^\dagger.
\end{align}
Leveraging the properties of rank-one Hermitian matrices and orthonormality, the projection of any rank-one constraint matrix (e.g., $\widetilde{\mathbf{h}}_k \widetilde{\mathbf{h}}_k^\dagger$) onto $\widehat{\V}_k$ is always greater than or equal to its projection onto $\widetilde{\V}_k$, i.e.,
$\tr\left( \widetilde{\mathbf{h}}_k \widetilde{\mathbf{h}}_k^\dagger \widehat{\V}_k \right) \ge \tr\left( \widetilde{\mathbf{h}}_k \widetilde{\mathbf{h}}_k^\dagger \widetilde{\V}_k \right)$, and similar inequalities hold for the radar leakage and eavesdropper constraints in the primal problem. To ensure that $\widehat{\V}_k$ satisfies all the constraints with the same margins, we introduce a scaling factor
\begin{align}
t = \min_{\widetilde{\mathbf{q}} \in \{ \widetilde{\mathbf{h}}_k, \widetilde{\boldsymbol{y}}, \widetilde{\mathbf{h}}_e \}} \frac{ \tr(\widetilde{\mathbf{q}} \widetilde{\mathbf{q}}^\dagger \widetilde{\V}_k) }{ \left( \sum_{j=1}^r \sigma_j \right) |\widetilde{\mathbf{q}}^\dagger \mathbf{u}_{j^\star}|^2 },
\end{align}
and define the rescaled feasible point $\overline{\V}_k = t \, \widehat{\V}_k$. Since $t \in (0,1]$, we have
\begin{align}
\tr(\overline{\V}_k) = t \sum_{j=1}^r \sigma_j \leq \sum_{j=1}^r \sigma_j = \tr(\widetilde{\V}_k),
\end{align}
which implies that $\overline{\V}_k$ achieves a lower or equal objective value while remaining feasible, which contradicts the optimality of $\widetilde{\V}_k$. Hence, the original assumption that $r > 1$ must be false, and we conclude that $\widetilde{\V}_k$ must be rank-one.

Since congruence transformation \eqref{transform_congru} preserves rank \cite{horn2012matrix}, we obtain $\rank(\V_k^\star) = \rank(\widetilde{\V}_k) = 1,~k\in\K$. Thus, the SDR in Problem~$\Prob{1.4}$ yields a rank-one optimal solution. \hfill$\blacksquare$

\bibliographystyle{IEEEtran}
\bibliography{references.bib}

\end{document}